\def\F{\mathbb{F}}
\newcommand{\SMP}{\mathsf{SMP}}
\newcommand{\set}[1]{ \left\{ #1 \right\} }
\newcommand{\abs}[1]{ \vert #1 \vert }
\newcommand{\MEQ}{\mathrm{MEQ}}
\newcommand{\MEQit}{\mathit{MEQ}}
\newcommand{\prob}[1]{\mathrm{#1}}
\newcommand{\ket}[1]{|{#1}\rangle}
\newcommand{\ignore}[1]{}
\title{Quantum Simultaneous Protocols without Public Coins using Modified Equality Queries}
\keywords{SMP model, multi-party communication, quantum distributed algorithms}
\author{Fran{\c c}ois Le Gall}
{Graduate School of Mathematics, Nagoya University, Japan}
{legall@math.nagoya-u.ac.jp}
{}
{
	JSPS KAKENHI grants Nos.~JP20H05966, 20H00579, 24H00071, MEXT Q-LEAP grant No.~JPMXS0120319794 and JST CREST grant No.~JPMJCR24I4.
}
\author{Oran Nadler}
	{Blavatnik School of Computer Science, Tel Aviv University, Israel}
	{oran.nadler@gmail.com}
{}
{}
\author{Harumichi Nishimura}
{Graduate School of Informatics, Nagoya University, Japan}
{hnishimura@i.nagoya-u.ac.jp}
{}
{
	JSPS KAKENHI grants Nos.~JP20H05966, 22H00522, 24H00071, 24K22293,  MEXT Q-LEAP grant No.~JPMXS0120319794 and JST CREST grant No.~JPMJCR24I4.
}
\author{Rotem Oshman}
	{Blavatnik School of Computer Science, Tel Aviv University, Israel}
	{roshman@tau.ac.il}
{}
{
ISF grants Nos.~2801/20 and 3725/24 and NSF-BSF grant No.~2022699.}
\authorrunning{F.\ Le Gall, O.\ Nadler, H.\ Nishimura and R.\ Oshman}
\begin{document}
\maketitle
\begin{abstract}
  In this paper we study a quantum version of the multiparty simultaneous message-passing ($\SMP$) model, and we show that in some cases, quantum communication can replace public randomness, even with no entanglement between the parties. This was already known for two players, but not for more than two players, and indeed, so far all that was known was a negative result. Our main technical contribution is a compiler that takes any classical public-coin simultaneous protocol based on ``modified equality queries,'' and converts it into a quantum simultaneous protocol without public coins with roughly the same communication complexity. We then use our compiler to derive protocols for several problems, including frequency moments, neighborhood diversity, enumeration of isolated cliques, and more. 
\end{abstract}

\section{Introduction}
\label{sec:intro}

In the \emph{multiparty simultaneous message-passing} ($\SMP$) {\emph{model} 
we have $k$ players with private inputs $x_1 ,\ldots,x_k \in \set{0,1}^n$,
and we would like to compute a function $f(x_1,\ldots,x_k)$ of the joint inputs.
To this end, each player $\ell \in [k]$ computes a message $M_\ell$,
which it sends to a \emph{referee}.
The referee collects all $k$ messages and produces an output, which should equal $f(x_1,\ldots,x_k)$,
except possibly with some small error probability.
Our goal is to use as little \emph{communication} as possible, that is,
the total number of bits sent by the players to the referee should be minimized.
There are several well-studied {classical} variants of the $\SMP$ model:
a \emph{deterministic} variant, where the participants (i.e., the players and the referee) have no randomness, and no error is allowed;
a \emph{public-coin} variant, where the participants have access to a common random string;
and a \emph{private-coin} variant, where each participant has access to its own random string. 
The public-coin variant is the strongest of the three, but arguably, it is unrealistic:
in the absence of prior coordination, many distributed systems
do not have a source of common randomness, and must make do 
with the private randomness available to each participant.%
\footnote{In \emph{non-simultaneous} protocols it is possible to replace public randomness with private randomness~\cite{Newman91}, 
but this requires at least one synchronized round of communication, and synchronization comes with its own costs.}

In this paper we study a \emph{quantum} version of the $\SMP$ model, and we show that in some cases, quantum communication can replace public randomness, even with no entanglement between the parties.
This was already known for two players \cite{BCWW01,Yao03}, but it was not known for more than two players, and indeed, so far all that was known was a negative result~\cite{Gavinsky+CCC13} showing that for $k = 3$ players, there exist problems for which quantum communication with no public randomness or entanglement is exponentially weaker than classical communication \emph{with} public randomness.
Our main technical contribution is a compiler that takes any classical public-coin simultaneous protocol based on \emph{modified equality queries} (which we define below), and converts it into a quantum \emph{private-coin} simultaneous protocol with roughly the same communication complexity. We then use our compiler to derive protocols for several problems, including \emph{frequency moments},  \emph{neighborhood diversity}~\cite{Lam12}, \emph{enumeration of isolated cliques}~\cite{KHMN09}, and more.\vspace{-2mm}

\subparagraph*{Modified equality queries.}
We observe that quite a few problems studied in the literature can be solved by repeatedly executing the following type of query,
$\MEQ_{k,n}(i, j, y, z)$: for
two indices $i,j \in [k]$, and two strings $y, z \in \set{0,1}^n$ known only to the referee,
is it the case that $x_i \oplus y = x_j \oplus z$? (Here, $x_i, x_j \in \set{0,1}^n$ are the inputs of players $i,j$, respectively.)
We refer to such queries as \emph{modified equality queries}.

There is a well-known quantum simultaneous protocol~\cite{BCWW01} for ``plain'' equality queries, where we merely wish to determine
whether $x_i = x_j$ (without the modifying strings $y,z$).
In the protocol of~\cite{BCWW01}, each player computes a short \emph{quantum fingerprint} of its input and sends it to the referee,
who then uses the quantum fingerprints to compare the players' inputs (with some error probability; see Section~\ref{sec:prelim} for the details).
We observe that quantum fingerprints also allow us to implement \emph{modified} equality queries,
and moreover, this can be done even if the players do not know the strings $y,z$:
\begin{lemma}
	For any $s\geq 1$, let $\mathcal{Q}_{s}$ denote the set of all $s$-qubit quantum states.%
	\footnote{An \emph{$s$-qubit quantum state} is any quantum state that can be represented in $s$
		qubits; it is essentially a quantum superposition over classical $s$-bit strings.
See Section~\ref{sec:prelim} for the precise definition.}
	For any $n \geq 1$ and $\varepsilon \in (0,1)$,
	there is a quantum operator%
	\footnote{A \emph{quantum operator} is an operation that maps one quantum state into another.
		In our case, we apply it to a classical string, which is slight notation abuse.}
	$F : \set{0,1}^n \rightarrow \mathcal{Q}_{s}$ with 
	$s=O(\log n \cdot \log(1/\varepsilon))$ 
	such that if each player $\ell$ sends $F(x_\ell)$ to the referee,
	then for any $i,j \in [k]$ and any $y,z \in \set{0,1}^n$,
	the referee can compute 
	$\MEQ_{k,n}(i, j, y, z)$
	with error probability at most $\varepsilon$.
	\label{lemma:MEQ}
\end{lemma}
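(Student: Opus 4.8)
The plan is to adapt the quantum fingerprinting construction of~\cite{BCWW01} so that the referee can itself ``twist'' a player's fingerprint by a string it knows. I would start from a \emph{linear} error-correcting code $E : \set{0,1}^n \to \set{0,1}^m$ with block length $m = O(n)$ and relative minimum distance $\delta$ for some absolute constant $\delta \in (0,1)$, say $\delta = 1/4$; such codes are explicitly known (e.g.\ expander codes or concatenated codes). Linearity is the key property: $E(x \oplus y) = E(x) \oplus E(y)$ for all $x,y$. Define the single-copy fingerprint $|h_x\rangle = \frac{1}{\sqrt m}\sum_{\ell=1}^m |\ell\rangle\,|E(x)_\ell\rangle$, which uses $\lceil \log m\rceil + 1 = O(\log n)$ qubits, and set $F(x) = |h_x\rangle^{\otimes t}$ with $t = O(\log(1/\varepsilon))$, so the total number of qubits is $s = O(\log n \cdot \log(1/\varepsilon))$, as required. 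Each player $\ell$ sends $F(x_\ell)$.

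Next I would describe the referee's procedure on a query $\MEQ_{k,n}(i,j,y,z)$. The underlying identity is $x_i \oplus y = x_j \oplus z \iff E(x_i) \oplus E(y) = E(x_j) \oplus E(z)$, and $E(y), E(z)$ are computable by the referee. Since $E$ is linear, the map $U_v : |\ell\rangle|b\rangle \mapsto |\ell\rangle|b \oplus E(v)_\ell\rangle$ is a permutation unitary that the referee can implement knowing only $v$, and it sends $|h_x\rangle$ to $|h_{x\oplus v}\rangle$. The referee applies $U_y$ to each of the $t$ copies received from player $i$ and $U_z$ to each of the $t$ copies received from player $j$ (discarding the other players' messages), obtaining $t$ copies each of $|h_{x_i \oplus y}\rangle$ and $|h_{x_j \oplus z}\rangle$. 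It then runs $t$ independent SWAP tests, pairing the $r$-th copy of $|h_{x_i\oplus y}\rangle$ with the $r$-th copy of $|h_{x_j\oplus z}\rangle$, and outputs ``$x_i \oplus y = x_j \oplus z$'' iff all $t$ tests report ``equal''.

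For correctness I would use $\langle h_u | h_w\rangle = 1 - \Delta(E(u),E(w))/m \in [0,1]$, where $\Delta$ denotes Hamming distance. If $x_i \oplus y = x_j \oplus z$ then $|h_{x_i\oplus y}\rangle = |h_{x_j \oplus z}\rangle$, so every SWAP test reports ``equal'' with certainty and the referee never errs. Otherwise $E(x_i\oplus y) \oplus E(x_j \oplus z) = E(x_i \oplus y \oplus x_j \oplus z)$ is a nonzero codeword, hence of weight at least $\delta m$, so the overlap is at most $1-\delta$, each SWAP test reports ``equal'' with probability at most $\tfrac{1}{2} + \tfrac{1}{2}(1-\delta)^2 < 1$, and all $t$ of them do so with probability at most $\bigl(\tfrac{1}{2} + \tfrac{1}{2}(1-\delta)^2\bigr)^t \le \varepsilon$ for a suitable $t = O(\log(1/\varepsilon))$. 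This actually yields one-sided error.

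The one point to be careful about is the choice of fingerprint encoding: using the ``register'' form $|h_x\rangle = \frac{1}{\sqrt m}\sum_\ell |\ell\rangle|E(x)_\ell\rangle$ rather than the phase form $\frac{1}{\sqrt m}\sum_\ell (-1)^{E(x)_\ell}|\ell\rangle$ keeps $\langle h_u|h_w\rangle$ in $[0,1]$, so a one-sided distance bound on the code suffices and I need not worry about the overlap being near $-1$; crucially, the $E(v)$-dependent twist $U_v$ is still implementable in this form precisely because $E$ is linear. A secondary remark worth including is that the referee's operations depend only on its own query $(i,j,y,z)$ and not on the inputs, so the same messages $F(x_\ell)$ answer whichever single query is asked; an application needing several queries simply asks each player for enough independent copies and applies a union bound over the per-query error.
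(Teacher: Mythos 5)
Your proposal is correct and follows essentially the same route as the paper: a linear error-correcting code yields a linear quantum fingerprint family, the referee applies the $(y,z)$-dependent unitary twist to the received copies, and $t=O(\log(1/\varepsilon))$ SWAP tests amplify the one-sided error. The only (harmless) deviation is your use of the ``register'' encoding $\frac{1}{\sqrt m}\sum_\ell |\ell\rangle|E(x)_\ell\rangle$ in place of the paper's phase encoding $\frac{1}{\sqrt m}\sum_\ell (-1)^{E(x)_\ell}|\ell\rangle$ --- a reasonable choice, since it lets you get away with only a lower bound on the code's distance, whereas the phase encoding implicitly needs the nonzero codeword weights bounded away from $m$ as well.
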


We stress that Lemma \ref{lemma:MEQ} only states that the referee can compute $\MEQ_{k,n}(i, j, y, z)$ for \emph{one} 4-tuple $(i,j,y,z)$. Unlike classical protocols, in the quantum world it is not technically immediate to re-use information sent by the players to compute the value of the query for more than one 4-tuple (showing how to bypass this difficulty is indeed one of the main contributions of this paper).\vspace{-2mm}

\subparagraph*{Compiling $\boldsymbol{\MEQ}$ decision trees into quantum protocols.}
Although Lemma~\ref{lemma:MEQ} allows us to implement a single modified equality query,
by itself it is not enough to obtain an efficient protocol for many of the problems we want to solve,
as these problems require us to execute \emph{many} such queries.
For example,
in the \emph{distinct elements} problem, the goal is to determine the number of distinct values
among $x_1,\ldots,x_k$.
Solving this problem requires $\binom{k}{2}$ ``plain'' equality queries, as every player's input must be compared against all the others.
In general, our goal is to work with protocols represented by an \emph{$\MEQit_{k,n}$ decision tree}:
a rooted binary tree whose inner nodes are labeled by $\MEQ_{k,n}$ queries,
and whose leaves are labeled by output values (e.g., 0 or 1 if the tree computes a Boolean function).
The tree is evaluated starting from the root,
and at each step we evaluate the query written in the current node,
proceeding to the left child if the answer is~0 and to right child if the answer is~1,
until we eventually reach a leaf and output the value written in it.

A na\"ive application of Lemma~\ref{lemma:MEQ} results in a quantum protocol 
whose communication cost scales {linearly} with the depth of the decision tree, as we must call the protocol from Lemma~\ref{lemma:MEQ} at each step.
However, we can do much better:
we show that we can compile a decision tree into a quantum protocol whose communication cost
depends only {logarithmically} on the depth of the tree.
The key is to \emph{re-use information}:
instead of evaluating each modified equality query on its own,
we would like to re-use the information sent by the players, so that evaluating multiple queries that involve the same player $i$
will not require player $i$ to send fresh information each time.

As already mentioned, unlike classical protocols,
in the quantum world it is not technically immediate to re-use information sent by the players. First, quantum states cannot be duplicated (this is a consequence of the no-cloning theorem in quantum information theory).
Second, if at any point we \emph{measure} a quantum state, we may cause it to collapse,
losing all the information that was stored in it (except for the outcome of the measurement),
and preventing it from being re-used.
This indeed happens in the protocol from Lemma~\ref{lemma:MEQ}.
To avoid this pitfall, we use \emph{gentle measurements} (see, e.g., \cite[Section 1.3]{Aar2016}),
relying on the fact that a measurement whose outcome is ``nearly certain''
has very little effect on the quantum state we are measuring.
To ensure that the outcome of each measurement we make is ``nearly certain'', 
we \emph{amplify} the success probability of each query $\MEQ_{k,n}(i, j, y, z)$ so that
if $x_i \oplus y = x_j \oplus z$ then the measurement returns 1 with probability nearly 1,
and if $x_i \oplus y \neq x_j \oplus z$ then the measurement returns 1 with probability nearly 0. 
Finally, we observe that the quantum union bound by Gao \cite{Gao15} can be used
and conclude that the measurements can be applied sequentially on the same state with only a small decrease of the success probability.

Ultimately, our result is the following:
\begin{theorem}\label{th:main}
	For any $n, k, D \geq 0$ and $\delta \in (0,1)$,
	any $\MEQ_{k,n}$ decision tree of depth $D$ can be implemented
	by a quantum $k$-party $\SMP$ protocol that uses
	$O(k  (\log D + \log(1/\delta))\log n)$ qubits
	and has error probability at most $\delta$.
\end{theorem}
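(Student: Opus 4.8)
The plan is for each player $\ell$ to send a single quantum fingerprint $F(x_\ell)$ as in Lemma~\ref{lemma:MEQ}, instantiated with a per-query error $\varepsilon$ small enough that the referee can afford to evaluate all queries along one root-to-leaf path of the decision tree, one after another, reusing the same received state. Concretely I would take $\varepsilon := c\,(\delta/D)^2$ for a small universal constant $c$; then $s = O\big(\log n \cdot \log(1/\varepsilon)\big) = O\big(\log n\,(\log D + \log(1/\delta))\big)$ qubits per player, and the total communication is $O\big(k\,(\log D + \log(1/\delta))\log n\big)$ qubits, as claimed. Given the $k$ fingerprints, the referee simply walks down the tree from the root: at the current node, labeled by a query $\MEQ_{k,n}(i,j,y,z)$ with $y,z$ known to it, the referee applies the two-outcome measurement supplied by Lemma~\ref{lemma:MEQ} to the received register and moves to the child selected by the outcome, until it reaches a leaf and outputs the leaf's label. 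Since the tree has depth $D$, the referee performs at most $D$ such measurements on the same, un-clonable state, so the whole difficulty is to argue that this procedure still identifies the correct leaf with probability at least $1-\delta$.

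For the analysis, fix an input $(x_1,\dots,x_k)$, and let $v_0 = \text{root}, v_1, \dots, v_L$ (with $L \le D$) be the unique root-to-leaf path that a noiseless evaluation would follow on this input. For each $t < L$ let $\Pi_t$ denote the two-outcome measurement operator of the query labeling $v_t$ that corresponds to its \emph{correct} answer on this input; by Lemma~\ref{lemma:MEQ} with the small $\varepsilon$ fixed above, $\tr(\Pi_t\,\rho_0) \ge 1-\varepsilon$, where $\rho_0 = F(x_1)\otimes\cdots\otimes F(x_k)$ is the received state. Conditioned on every measurement so far having returned the correct outcome, the referee sits at the corresponding node $v_t$ and its next measurement is exactly $\{\Pi_t, I - \Pi_t\}$; hence the event ``the referee follows the entire path $v_0,\dots,v_L$ and outputs the correct label of $v_L$'' coincides with the event that the \emph{fixed} sequential measurement $\Pi_0,\dots,\Pi_{L-1}$ applied to $\rho_0$ returns the correct outcome at every step. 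This puts us squarely in the setting of a sequential-measurement union bound: Gao's quantum union bound~\cite{Gao15} gives that all $L \le D$ measurements succeed with probability at least $1 - O(\sqrt{D\varepsilon}) \ge 1-\delta$ for our choice of $\varepsilon$, so the protocol errs with probability at most $\delta$. This is exactly where the ``gentle measurement'' intuition enters: a tiny $\varepsilon$ makes each $\Pi_t$ accept $\rho_0$ with probability nearly $1$, hence disturb the state only slightly, and \cite{Gao15} is the quantitative statement that the cumulative disturbance of $D$ such near-certain measurements stays under control.

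The step I expect to require the most care is the interface with Lemma~\ref{lemma:MEQ}. As literally stated, it only asserts that the referee ``can compute'' a single query; for the argument above I need it in the more operational form ``there is a two-outcome measurement $\{\Pi, I-\Pi\}$, determined by $(i,j,y,z)$, acting only on the fingerprint sub-registers of players $i$ and $j$, and whose outcome equals $\MEQ_{k,n}(i,j,y,z)$ except with probability $\varepsilon$.'' One has to extract this from the fingerprint construction behind Lemma~\ref{lemma:MEQ} --- the referee applies to the relevant copies the phase corrections determined by $y$ and $z$ (so that, using linearity of the underlying code, $F(x_i)$ effectively becomes $F(x_i \oplus y)$) and then runs an amplified SWAP-type test --- and, crucially, one must check that the measurements for queries \emph{not} involving a given player act as the identity on that player's register. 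This is what guarantees that the only registers accumulating disturbance over the walk are those of the players actually queried, which is precisely the budget the union bound in the previous paragraph accounts for. Once this operational, register-local form of Lemma~\ref{lemma:MEQ} is in hand, the remaining work --- fixing $\varepsilon$, bounding $s$, and invoking \cite{Gao15} --- is routine.
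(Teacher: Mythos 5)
Your proposal follows essentially the same route as the paper: each player sends one amplified fingerprint, the referee walks down the tree applying one two-outcome measurement per node on the same received state, and Gao's quantum union bound controls the cumulative disturbance. Two small corrections are worth making. First, the bound you quote, $1-O(\sqrt{D\varepsilon})$, is the older square-root (gentle-measurement) union bound; Gao's theorem is the \emph{linear} form $1-4\sum_i\bigl(1-\mathcal{M}_i^{s_i}(\ket{\psi})\bigr)=1-4D\varepsilon$, which is why the paper can take $\varepsilon=\delta/(4D)$ rather than your $c(\delta/D)^2$ --- your more conservative choice only changes $\log(1/\varepsilon)$ by a constant factor, so the stated communication bound is unaffected. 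Second, the step you flag as delicate is real but you misidentify what it requires: the issue is not register-locality (Gao's bound applies to arbitrary sequential projective measurements on a common register), but \emph{projectivity} --- the amplified SWAP-test procedure of Lemma~\ref{lemma:MEQ} introduces and measures ancilla qubits and is therefore not a projective measurement on the fingerprint registers, so before invoking Gao one must convert it into an equivalent two-outcome projective measurement (the standard dilation $U$, CNOT onto a fresh ancilla, $U^{-1}$, as in Appendix~\ref{Appendix:pm}), which preserves the outcome probabilities and yields a well-defined reusable post-measurement state. With that conversion made explicit, your argument coincides with the paper's.
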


\subparagraph*{Applications.}
We give several applications of our compiler in Section \ref{sec:app}.
Several are technically straightforward. For instance, using ``plain'' equality queries, we can compare all the players' inputs to one another, which allows us to count the number of distinct elements or compute other frequency moments of the input.
Next we turn to more complex applications involving graphs:
we show that in the number-in-hand network model~\cite{BMRT14} (a special case of the $\SMP$ model also sometimes called \emph{broadcast congested clique}),
we can use our compiler to obtain efficient simultaneous quantum protocols for $P_3$- and $P_4$-induced subgraph freeness \cite{KMRS15,MPRT20},
computing neighborhood diversity~\cite{Lam12},
enumerating isolated cliques~\cite{KHMN09} and
reconstructing distance-hereditary graphs~\cite{KMRS15,MPRT20}.
For all these problems, we obtain efficient quantum protocols that do not require public randomness:
in all of our protocols, each player only sends $\mathrm{polylog}(n,k)$ qubits. This cost matches the cost of public-coin classical protocols and improves exponentially the cost of private-coin classical protocols.\footnote{For private-coin classical protocols a lower bound of the form $\Omega(\sqrt{n})$ or $\Omega(\sqrt{k})$ trivially follows from the lower bound on the cost of private-coin classical protocols for the two-party equality function \cite{Babai+97,Newman1996}.}\vspace{-3mm}

\subparagraph*{Relation with prior works on quantum distributed computing.}
Several works \cite{A+24,AVPODC22,Arfaoui+14,CR+STOC24,Denchev+08,ElkinKNP14,FLNP21,Fraigniaud2024,Gavinsky+CCC13,GavoilleKM09,Hasegawa+PODC24,Izumi+PODC19,Izumi+STACS20,LeGall+2023,LeGall+PODC18,LMN23,LeGall+STACS19,Tani+12,WYPODC22} have investigated how quantum communication can help for various computational tasks and settings in distributed computing. To our knowledge, theoretical aspects of the quantum multi-party $\SMP$ model have only been considered in Ref.~\cite{Gavinsky+CCC13}, which we already mentioned, and Ref.~\cite{Gavinsky+2008}, which focuses on a different input model (the number-on-the-forehead model). 
There are also a few experimental investigations of multiparty simultaneous quantum protocols \cite{GS20,Q+21}, but these works are mainly empirical and not concerned with asymptotic complexity.

\section{Preliminaries}
\label{sec:prelim}

\subparagraph*{Notation and terminology.}
For any integer $n\ge 1$, we write $[n]=\{1,\ldots,n\}$. 
For any strings $x,x'\in\{0,1\}^n$, we denote by $\Delta_n(x,x')$ the Hamming distance between $x$ and $x'$. 

In this paper we consider undirected graphs with no self-loops $G = (V,E)$ over $k = |V|$ nodes (since the number of nodes will always match the number of players in the protocol, we use the same notation $k$ for both).
We use $\deg(v)$ to denote the degree of node $v\in V$.
We often implicitly assume that $V = \set{1,\ldots,k}$.
Let $N(v) \subseteq V$ denote the neighbors of node $v \in V$ in~$G$,
and $\nu_{v} \in \set{0,1}^k$ denote the characteristic vector 
of $N(v)$, where $\nu_{v}[u] = 1$ iff $\set{ v, u } \in E$ for each $u \in [k]$.
Let $e_{v} \in \set{0,1}^k$ be the 
characteristic vector of the singleton $\set{v}$, 
i.e., the vector where $e_{v}[u] = 1$ iff $u = v$.
For a subset $S \subseteq V$, we use $G[S]$ to denote the subgraph of~$G$ induced by $S$. For a node $v\in V$, we define $G-v$ as $G-v=G[V\backslash\{v\}]$.

A node $v$ in $G$ is called {\em pendant} if $v$ has only one neighbor. 
Two nodes $u,v$ in $G$ are called {\em false twins} (resp., {\em true twins}) if $u$ and $v$ are not adjacent (resp., adjacent), 
and have the same neighborhood, that is, $N(u)=N(v)$ (resp., $N(u) \cup \set{u} = N(v) \cup \set{v}$,
or equivalently,
	$N(u)\setminus \{v\}=N(v)\setminus\{u\}$).
We say that  $u,v$ are \emph{twins} if they are either true twins or false twins.
Note that
two non-adjacent nodes $u,v$ cannot have $N(u) \cup \set{u} = N(v) \cup \set{v}$,
and because there are no self-loops in the graph,
two adjacent nodes $u,v$ cannot have $N(u) = N(v)$.
Therefore, in terms of neighborhood vectors, we have:
\begin{proposition}
	Nodes $u\neq v$ are false twins if and only if $\nu_u = \nu_v$,
	and true twins if and only if $\nu_u \oplus e_u = \nu_v \oplus e_v$.
	\label{prop:twins}
\end{proposition}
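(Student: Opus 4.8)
The plan is to translate both claimed equivalences into the language of sets and then invoke the two observations already recorded just before the statement, namely that the graph has no self-loops (so $u \notin N(u)$ for every node $u$) and that $u \neq v$. By definition $\nu_u$ is the characteristic vector of $N(u)$, so $\nu_u = \nu_v$ holds if and only if $N(u) = N(v)$ as subsets of $[k]$. Likewise, since $u \notin N(u)$, the coordinate $\nu_u[u]$ equals $0$, and $e_u$ is supported only on coordinate $u$; hence $\nu_u \oplus e_u$ is obtained from $\nu_u$ by flipping that one coordinate from $0$ to $1$, i.e.\ $\nu_u \oplus e_u$ is exactly the characteristic vector of $N(u) \cup \set{u}$. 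Consequently $\nu_u \oplus e_u = \nu_v \oplus e_v$ if and only if $N(u) \cup \set{u} = N(v) \cup \set{v}$.

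With these two translations in hand, the false-twin direction reduces to the remark that $N(u) = N(v)$ together with $u \neq v$ already forces non-adjacency: if $u$ and $v$ were adjacent then $v \in N(u) = N(v)$, contradicting the absence of self-loops. The converse is immediate, since false twins satisfy $N(u) = N(v)$ by definition, hence $\nu_u = \nu_v$. This establishes the first equivalence.

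For the true-twin direction, one argues symmetrically: $N(u) \cup \set{u} = N(v) \cup \set{v}$ together with $u \neq v$ forces adjacency, because $u \in N(u) \cup \set{u} = N(v) \cup \set{v}$ and $u \neq v$ give $u \in N(v)$. The converse again follows from the definition, so $\nu_u \oplus e_u = \nu_v \oplus e_v$. If one prefers the alternative form $N(u)\setminus\set{v} = N(v)\setminus\set{u}$ of the true-twin condition, it suffices to note that under adjacency this is equivalent to $N(u)\cup\set{u} = N(v)\cup\set{v}$ (add $u$, equivalently $v$, to both sides). Combining the two parts yields the proposition.

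I do not anticipate any genuine obstacle; the only point that needs care is the second translation step—checking that XORing $\nu_u$ with $e_u$ converts the open-neighborhood characteristic vector into the closed-neighborhood one—which works precisely because self-loops are disallowed, together with the observation that the adjacency (resp.\ non-adjacency) clause in the definition of true (resp.\ false) twins is a \emph{consequence} of the corresponding vector equality rather than an independent hypothesis.
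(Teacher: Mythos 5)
Your argument is correct and is essentially the paper's own: the paper proves the proposition via the two remarks immediately preceding it (non-adjacent nodes cannot share closed neighborhoods, and adjacent nodes cannot share open neighborhoods, since there are no self-loops), which is exactly your observation that the adjacency/non-adjacency clause is a consequence of the corresponding set equality. Your extra care in checking that $\nu_u \oplus e_u$ is the characteristic vector of $N(u)\cup\set{u}$ is a detail the paper leaves implicit, but the route is the same.
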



\subparagraph*{SMP protocols and NIH network model.}
A \emph{simultaneous message-passing} ($\SMP$) \emph{protocol}
features $k$ players with inputs $x_1,\ldots,x_k \in \set{0,1}^n$, respectively,
and a referee, who does not know $x_1,\ldots,x_k$.
In the protocol, each player sends one message to the referee,
and the referee then produces an output.
The goal of the referee is to compute some function $f(x_1,\ldots,x_k)$ of the inputs,
and we say that the protocol \emph{succeeds} whenever the referee's output is correct.
The \emph{communication cost} of the protocol is the maximum total number of bits sent 
by the players to the referee in any execution of the protocol, on any input. 
We say that a protocol is bounded-error if for any input, it outputs the correct answer with probability at least $2/3$. 
In this paper we do not assume that the players have shared randomness;
each player's message depends only on its own input. 

A special case of the $\SMP$ model is the \emph{number-in-hand (NIH) network model}~\cite{BMRT14}.
Here, the input to the computation is an undirected graph $G = (V,E)$ over $k$ nodes,
and each party represents a node in the graph.
The input to player $v \in [k]$ is the neighborhood vector $\nu_v \in \set{0,1}^k$ (we thus have $n = k$ in this case),
and the referee is asked to solve some graph problem on $G$.

In the quantum versions of the $\SMP$ model and the NIH network model, the only difference is that players are allowed to send quantum information to the referee. The \emph{communication cost} of the protocol is the maximum total number of quantum bits (qubits) sent by the players to the referee in any execution of the protocol. We do not assume that the players have shared randomness or shared entanglement; each player's message depends again only on its own input.\vspace{-3mm}

\subparagraph*{Basics of quantum information.}
The most basic notion in quantum information is the concept of quantum bit (qubit), which represents the state of an elementary physical system that follows the laws of quantum mechanics (e.g., one photon). Qubits are physically stored in \emph{quantum registers}. 
Mathematically, the state of a quantum register consisting of $q$ qubits is described by a unit-norm complex vector of dimension $m$, where $m=2^q$, and usually written using Dirac's notation as $\ket{\psi}$. By taking an orthonormal basis of the $m$-dimensional complex vector space and indexing these basis vectors (again using Dirac's notation) as $\ket{j}$ for all $j\in \{1,\ldots,m\}$, we can write
$
\ket{\psi}=\sum_{j=1}^{m} \alpha_j\ket{j}
$
for complex numbers $\alpha_j$ such that the state has norm 1 (i.e., satisfying $\sum_{j=1}^m \vert \alpha_j\vert^2=1$).

All transformations on quantum registers need to be unitary, i.e., described by unitary matrices. The main unitary matrices that will appear in the technical parts of this paper are the Hadamard gate (denoted $H$) and the Pauli $X$ gate, both acting on 1 qubit, and the $\mathit{CNOT}$ gate acting on two qubits. The precise definition of these gates will not be necessary for understanding this paper.

Information can only be extracted from a quantum register by measurements. The most elementary type of measurements is measurement of a 1-qubit register in the computational basis. For a 1-qubit register in the state 
$\ket{\psi}=\alpha\ket{0}+\beta\ket{1}$ where $\abs{\alpha}^2+\abs{\beta}^2=1$, measuring it in the computational basis gives as outcome 1 bit: the outcome is $0$ with probability $\abs{\alpha}^2$ and $1$ with probability $\abs{\beta}^2$. Importantly, the state collapses (i.e., is irreversibly modified) after the measurement: in the former case the postmeasurement state is $\ket{0}$, while in the later case the postmeasurement state is $\ket{1}$.

Several more general kinds of measurements are allowed by quantum mechanics. In this paper, we will mostly use the \emph{2-outcome measurements} defined as follows. A 2-outcome measurement of a $q$-qubit register ${\sf{R}}$ is the following process:  introduce a new 1-qubit register $\mathsf{S}$ initialized to $\ket{0}$, apply a unitary transform $U$ on the whole system, and then measure Register $\mathsf{S}$ in the computational basis, which gives as outcome a bit $b\in\{0,1\}$. We refer to Figure~\ref{fig:GM} in Appendix~\ref{Appendix:fig} for an illustration of the process. We denote such a 2-outcome measurement by $\mathcal{M}$. We will also use the following notation: for any bit $b\in\{0,1\}$ and any $q$-qubit quantum state $\ket{\psi}$, we denote by ${\cal M}^{b}(\ket{\psi})$ the probability of obtaining outcome $b$ by $\mathcal{M}$ when the initial state in $\sf{R}$ is $\ket{\psi}$.  

\subparagraph*{Quantum union bound.} We now present the quantum union bound by Gao \cite{Gao15}. While this bound only applies to a special type of 2-outcome measurements called 2-outcome projective measurements (defined in Appendix \ref{Appendix:pm}), any 2-outcome measurement can actually be efficiently converted  into a 2-outcome projective measurement (the conversion is described in Appendix \ref{Appendix:pm}).

Consider several 2-outcome projective measurements ${\cal M}_1,\ldots,{\cal M}_N$ acting on the same $q$-qubit register. Consider what happens when performing these $N$ measurements \emph{sequentially}. 
Specifically, assume that the system is initially in state $\ket{\psi}$.
We first perform ${\cal M}_1$ on $\ket{\psi}$ 
and obtain a postmeasurement state $\ket{\psi_1}$.
Then we perform ${\cal M}_2$ on $\ket{\psi_1}$ 
and obtain the postmeasurement state $\ket{\psi_2}$. 
And so it carries on, with each measurement being performed on the state resulting 
from the previous measurement. 
After $N$ measurements, we obtain the state $\ket{\psi_N}$. 
For an arbitrary binary string $s\in \{0,1\}^N$, we would like to estimate the probability that the sequence of outcomes of this measurement process is $s$, i.e., the probability that for all $i\in\{1,\ldots,N\}$, the outcome of measurement ${\cal M}_i$ is the bit~$s_i$. The following theorem by Gao \cite{Gao15} shows that this probability is high when for each $i\in\{1,\ldots,N\}$ applying measurement ${\cal M}_i$ on the \emph{initial state} $\ket{\psi}$ gives outcome~$s_i$ with high probability.

\begin{theorem}[Quantum union bound~\cite{Gao15}]\label{thm:Gao15}
    For any string $s\in \{0,1\}^N$, the probability that the above sequential measurement process has outcome $s$ is at least 
    \[
    1-4\sum_{i=1}^N \Big(1-{\cal M}^{s_i}_i(\ket{\psi})\Big). 
    \]
\end{theorem}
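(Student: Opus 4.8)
The plan is to reduce the claim to an inequality about norms of products of projectors, and then to extract the linear-in-error bound via an orthogonal decomposition of the accumulated error. Since each $\mathcal{M}_i$ is a $2$-outcome projective measurement, it is given by a pair of orthogonal projectors summing to the identity; write $P_i$ for the one corresponding to the target outcome $s_i$, so that $\mathcal{M}_i^{s_i}(\ket{\phi}) = \langle\phi|P_i|\phi\rangle = \|P_i\ket{\phi}\|^2$ for every unit vector $\ket{\phi}$, and set $\varepsilon_i := 1 - \mathcal{M}_i^{s_i}(\ket{\psi}) = \|(I-P_i)\ket{\psi}\|^2$. Unwinding the sequential process — each step multiplies the current state by $P_i$ and renormalizes, and the conditional probabilities multiply — the probability that the whole outcome sequence equals $s$ is exactly $\|P_N P_{N-1}\cdots P_1\ket{\psi}\|^2$. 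So, writing $\ket{\psi_0}=\ket{\psi}$ and $\ket{\psi_i}=P_i\ket{\psi_{i-1}}$ for the (unnormalized) intermediate states, the target is $\|\ket{\psi_N}\|^2 \ge 1 - 4\sum_{i=1}^N\varepsilon_i$.

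First I would record a telescoping identity coming from $P_i=P_i^\dagger=P_i^2$: for each $i$, $\|\ket{\psi_{i-1}}\|^2 - \|\ket{\psi_i}\|^2 = \langle\psi_{i-1}|(I-P_i)|\psi_{i-1}\rangle = \|(I-P_i)\ket{\psi_{i-1}}\|^2$, hence summing gives $1 - \|\ket{\psi_N}\|^2 = \sum_{i=1}^N \|(I-P_i)\ket{\psi_{i-1}}\|^2$. Thus it suffices to prove $\sum_i \|(I-P_i)\ket{\psi_{i-1}}\|^2 \le 4\sum_i\varepsilon_i$. The temptation is to bound each $\ket{\psi_{i-1}}$ against $\ket{\psi}$ by a naive chaining, but that only yields $\|\ket{\psi_i}-\ket{\psi}\| \le \sum_{j\le i}\sqrt{\varepsilon_j}$ and, after squaring term by term, the far weaker bound $1 - 2\sum_i\sqrt{\varepsilon_i}$ (indeed a blow-up like $N\cdot\max_i\varepsilon_i$). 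The point of the proof is to do better than this.

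The key step is to track the error vectors $\ket{e_i}:=\ket{\psi_i}-\ket{\psi}$ (with $\ket{e_0}=0$) \emph{exactly}, using orthogonality. From $\ket{e_i}=P_i\ket{e_{i-1}}-(I-P_i)\ket{\psi}$, where the first summand lies in the range of $P_i$ and the second in the range of $I-P_i$, the Pythagorean theorem gives $\|\ket{e_i}\|^2 = \|P_i\ket{e_{i-1}}\|^2 + \varepsilon_i = \|\ket{e_{i-1}}\|^2 - \|(I-P_i)\ket{e_{i-1}}\|^2 + \varepsilon_i$. Rearranged, $\|(I-P_i)\ket{e_{i-1}}\|^2 = \|\ket{e_{i-1}}\|^2 - \|\ket{e_i}\|^2 + \varepsilon_i$, which telescopes to $\sum_{i=1}^N\|(I-P_i)\ket{e_{i-1}}\|^2 = \|\ket{e_0}\|^2 - \|\ket{e_N}\|^2 + \sum_i\varepsilon_i \le \sum_i\varepsilon_i$. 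Finally, since $(I-P_i)\ket{\psi_{i-1}} = (I-P_i)\ket{\psi}+(I-P_i)\ket{e_{i-1}}$, the triangle inequality together with $(a+b)^2\le 2a^2+2b^2$ gives $\|(I-P_i)\ket{\psi_{i-1}}\|^2 \le 2\varepsilon_i + 2\|(I-P_i)\ket{e_{i-1}}\|^2$; summing over $i$ and inserting the telescoping bound yields $\sum_i\|(I-P_i)\ket{\psi_{i-1}}\|^2 \le 2\sum_i\varepsilon_i + 2\sum_i\varepsilon_i = 4\sum_i\varepsilon_i$, which is exactly the claim.

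The main obstacle is the one already flagged: avoiding the square-root loss. A routine error-propagation argument accumulates $\sqrt{\varepsilon_i}$ contributions and is far too lossy; the resolution is to split each $\ket{e_i}$ into its components in $\mathrm{Im}(P_i)$ and $\ker(P_i)$, which upgrades the error recursion to an exact identity that telescopes cleanly. Everything else is elementary. (If one wants the statement for a mixed initial state, the same computation goes through after passing to a purification.)
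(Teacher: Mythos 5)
Your proof is correct, but it cannot be matched step-by-step against the paper's, because the paper gives no proof of this statement at all: Theorem~\ref{thm:Gao15} is imported from Gao~\cite{Gao15} and used as a black box (the paper only supplies, in Appendix~\ref{Appendix:pm}, the conversion of general 2-outcome measurements into projective ones so that the bound applies). Your argument checks out at every step: writing $P_i$ for the projector of $\mathcal{M}_i$ associated with outcome $s_i$, the probability of the outcome sequence $s$ is indeed $\|P_N\cdots P_1\ket{\psi}\|^2$; the identity $1-\|\ket{\psi_N}\|^2=\sum_{i}\|(I-P_i)\ket{\psi_{i-1}}\|^2$ is exact because each $I-P_i$ is a projector; the decomposition $\ket{e_i}=P_i\ket{e_{i-1}}-(I-P_i)\ket{\psi}$ into components lying in the mutually orthogonal ranges of $P_i$ and $I-P_i$ gives an exact recursion whose telescoping yields $\sum_i\|(I-P_i)\ket{e_{i-1}}\|^2\le\sum_i\varepsilon_i$; and $\|a+b\|^2\le 2\|a\|^2+2\|b\|^2$ then produces the constant $4$. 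What you have reconstructed is essentially the short elementary proof of Gao's bound in the spirit of O'Donnell and Venkateswaran's ``quantum union bound made easy,'' rather than Gao's original, more operator-theoretic derivation. The trade-off: the paper's citation keeps the preliminaries short and states exactly what the compiler of Theorem~\ref{th:main} needs, while your self-contained argument makes the constant $4$ transparent, isolates the only ingredients actually used (Pythagoras for orthogonal components plus telescoping, with no square-root loss), and, as you note, extends to mixed initial states by purification.
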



\subparagraph*{The SWAP test.}
The SWAP test~\cite{BBD+97,BCWW01} is a quantum protocol that checks whether two quantum states $\ket{\psi_1}$ and $\ket{\psi_2}$ stored in two $q$-qubit registers ${\sf R}_1$ and ${\sf R}_2$, respectively, are close or not (i.e., estimates their inner product). For completeness we give a detailed description of the test in Appendix \ref{Appendix:SWAP} (this detailed description is not needed to understand the claims of this paper). 
The main property of the SWAP test is that the test outputs $1$ with probability $\frac{1}{2}+\frac{1}{2}|\langle\psi_1|\psi_2\rangle|^2$ (and outputs 0 with probability $\frac{1}{2}-\frac{1}{2}|\langle\psi_1|\psi_2\rangle|^2)$, where $\langle\psi_1|\psi_2\rangle$ denotes the inner product between $|\psi_1\rangle$ and $|\psi_2\rangle$.



The SWAP test is especially useful when combined with the notion of quantum fingerprints. We first give the definition of this concept. 
\begin{definition}\label{def:fingerprint}
A quantum fingerprint family for the set of $n$-bit strings is a family $\{ |h_x\rangle: x\in\{0,1\}^n \}$ such that the following conditions hold for each $x\in\{0,1\}^n$:
\begin{itemize}
    \item[1.]
    $\ket{h_x}$ is a $O(\log n)$-qubit quantum state; 
    \item[2.] 
    $|\langle h_x|h_{x'}\rangle|\leq \zeta$ holds for all $x'\in\{0,1\}^n\setminus\{x\}$, for some universal constant $\zeta\in (0,1/2]$. 
\end{itemize}
\end{definition}

For a quantum fingerprint family $\{ |h_x\rangle: x\in\{0,1\}^n \}$, the SWAP test on states $\ket{\psi_1}=\ket{h_x}$ and $\ket{\psi_2}=\ket{h_{x'}}$ outputs 1 with probability~1 if $x=x'$ (since $\langle h_x|h_{x}\rangle=1$) and outputs 1 with probability at most $\frac{1}{2}+\frac{\zeta^2}{2}$ if $x\neq x'$ (since $|\langle h_x|h_{x'}\rangle|\leq \zeta$). For later reference, we state this result in the following lemma.

 \begin{lemma}\label{lem:swap-test}
 When $\ket{h_x}$ and $\ket{h_{x'}\!}$ are given in ${\sf R}_1\!$ and ${\sf R}_2$, respectively, 
 the SWAP test outputs 1 with probability 1 if $x=x'$, and outputs 1 with probability at most $\frac{1}{2}+\frac{\zeta^2}{2}\le \frac{5}{8}$ if $x\neq x'$. 
 \end{lemma}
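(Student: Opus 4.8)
The plan is to reduce the statement entirely to the main property of the SWAP test already recalled above together with the defining inequality of a quantum fingerprint family. Concretely, I would invoke the fact that on input states $\ket{\psi_1}$ and $\ket{\psi_2}$ stored in ${\sf R}_1$ and ${\sf R}_2$, the SWAP test outputs $1$ with probability exactly $\frac{1}{2}+\frac{1}{2}|\langle\psi_1|\psi_2\rangle|^2$. Everything then follows by substituting $\ket{\psi_1}=\ket{h_x}$ and $\ket{\psi_2}=\ket{h_{x'}}$ and bounding the resulting inner product in the two cases.

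First I would handle the case $x=x'$: since $\ket{h_x}$ is a unit vector we have $\langle h_x|h_x\rangle=1$, so $|\langle\psi_1|\psi_2\rangle|^2=1$ and the output-$1$ probability is $\frac{1}{2}+\frac{1}{2}\cdot 1=1$, as claimed. Next I would handle the case $x\neq x'$: by Condition~2 of Definition~\ref{def:fingerprint} we have $|\langle h_x|h_{x'}\rangle|\le\zeta$, hence $|\langle\psi_1|\psi_2\rangle|^2\le\zeta^2$, and the output-$1$ probability is at most $\frac{1}{2}+\frac{\zeta^2}{2}$. Finally, using $\zeta\in(0,1/2]$ gives $\zeta^2\le 1/4$, so $\frac{1}{2}+\frac{\zeta^2}{2}\le\frac{1}{2}+\frac{1}{8}=\frac{5}{8}$, which completes the argument.

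There is essentially no technical obstacle here: the only subtlety worth a sentence is that the SWAP test's acceptance probability depends only on the squared modulus $|\langle\psi_1|\psi_2\rangle|^2$ of the inner product, so the fingerprint bound on $|\langle h_x|h_{x'}\rangle|$ transfers directly, and monotonicity of $t\mapsto\frac{1}{2}+\frac{t}{2}$ in $t=|\langle\psi_1|\psi_2\rangle|^2$ justifies replacing the exact value by the upper bound. Thus the lemma is an immediate corollary of the stated SWAP-test property and Definition~\ref{def:fingerprint}.
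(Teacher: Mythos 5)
Your proposal is correct and matches the paper's own argument: both simply substitute the fingerprint states into the SWAP test's acceptance probability $\frac{1}{2}+\frac{1}{2}|\langle\psi_1|\psi_2\rangle|^2$ and apply Condition~2 of Definition~\ref{def:fingerprint} together with $\zeta\le 1/2$. No further comment is needed.
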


 Ref.~\cite{BCWW01} showed how to create quantum fingerprint families. We will actually need a special kind of quantum fingerprint families, also used in \cite{GKW04}, that satisfies the following additional property: for any known string $y\in\{0,1\}^n$, the fingerprint of $x$ can be converted to the fingerprint of $x\oplus y$ (by a unitary transformation depending on $y$) without knowing the fingerprint of $x$. We call a quantum fingerprint family satisfying this additional property a \emph{linear quantum fingerprint family}. Ref.~\cite{GKW04} showed how to construct a linear quantum fingerprint family, which we write $\{|\Psi_x\rangle: x\in\{0,1\}^n \}$. 

 For completeness, we briefly describe the construction from \cite{GKW04}, which is 
based on constant rate linear error-correcting codes (the details of the construction will not be needed to understand the results in this paper). Take a linear function $E:\{0,1\}^n\rightarrow \{0,1\}^m$ where $m=O(n)$ such that 
$\Delta_m(E(x),E(x'))=\Omega(m)$
for any distinct $x,x'\in\{0,1\}^n$.
The corresponding quantum fingerprint of $x$ is then defined as the $O(\log n)$-qubit quantum state 
\[
|\Psi_x\rangle=\frac{1}{\sqrt{m}}\sum_{j=1}^m (-1)^{E(x)_j} |j\rangle,
\]
where $E(x)_j$ denotes the $j$th bit of $E(x)$. It is easy to check that this family of states satisfies all the required conditions. In particular, for any $y\in\{0,1\}^n$, the state $|\Psi_x\rangle$ can be converted to $|\Psi_{x\oplus y}\rangle$ by a unitary transformation (depending on $y$) without knowing $|\Psi_x\rangle$. 


\section{ \boldmath{ Quantum $\SMP$ Protocols Based on $\MEQ$ Decision Trees}}\label{sec:meq}
In this section we prove the main technical results of this paper (Lemma \ref{lemma:MEQ} and Theorem \ref{th:main}).

\subsection{Implementing a Single Query: Proof of Lemma \ref{lemma:MEQ}}\label{sub:prooflemma}

We first give a brief sketch of the proof. We use the linear quantum fingerprint family $\{|\Psi_x\rangle: x\in\{0,1\}^n \}$ introduced at the end of Section \ref{sec:prelim}. Each player sends the fingerprint corresponding to its input (i.e., player $\ell$ sends the state $|\Psi_{x_\ell}\rangle$). The referee then implements the SWAP test on the states $|\Psi_{x_i\oplus y}\rangle$ and $|\Psi_{x_j\oplus z}\rangle$, which can be constructed from the messages of player $i$ and the player $j$ due to the linearity property of the quantum fingerprint family. From Lemma \ref{lem:swap-test}, we know that the success probability of the SWAP test is at least $5/8$. We amplify the success probability by applying $O(\log(1/\varepsilon))$ SWAP tests in parallel, which requires each player to actually send $O(\log(1/\varepsilon))$ copies of its quantum fingerprint. We now explain all the details of the proof.

\begin{proof}[Proof of Lemma \ref{lemma:MEQ}]
Take $t=\Theta(\log(1/\varepsilon))$. For any $x\in\{0,1\}^n$ we define
$
F(x)=\ket{\Psi_{x}}^{\otimes t},
$
i.e., $t$ copies of the (linear) quantum fingerprint of $x$.
Since each state $\ket{\Psi_{x}}$ is encoded by $O(\log n)$ qubits, $F(x)$ is a quantum state of $O(t\log n)=O(\log(1/\varepsilon)\log n)$ qubits, as claimed. We now describe and analyze the referee's procedure.

\subparagraph*{Description of the referee's procedure.} 
Remember that the referee knows the indices $i,j$ and the strings $y,z$. The referee receives the quantum message $F(x_\ell)$ from player $\ell$, for each $\ell\in\{1,\ldots,k\}$. We assume that $F(x_\ell)$ is stored by the referee in registers $({\sf R}_{\ell,1},\ldots,{\sf R}_{\ell,t})$, where each ${\sf R}_{\ell,r}$ stores one copy of $\ket{\Psi_{x_\ell}}$.

\begin{figure}[ht!]
\begin{center}
  \fbox{
   \begin{minipage}{13 cm} \vspace{2mm}
  
  

  \noindent\hspace{3mm} 
  1. Convert $\ket{\Psi_{x_i}}^{\otimes t}$ into $\ket{\Psi_{x_i\oplus y}}^{\otimes t}$ in Registers $({\sf R}_{i,1},\ldots,{\sf R}_{i,t})$.

  \noindent\hspace{7mm} 
  Convert $\ket{\Psi_{x_j}}^{\otimes t}$ into $\ket{\Psi_{x_j\oplus z}}^{\otimes t}$ in Registers $({\sf R}_{j,1},\ldots,{\sf R}_{j,t})$.
  \vspace{2mm}

  \noindent\hspace{3mm}
  2. For every $r=1,\ldots,t$:
  
  \noindent\hspace{6mm}
  2.1. Introduce a register ${\sf S}_r$ initialized to $\ket{0}$.

  \noindent\hspace{6mm}
  2.2. Apply the Hadamard gate $H$ to ${\sf S}_r$.
  
  \noindent\hspace{6mm}
  2.3. (Controlled SWAP) If the content of ${\sf S}_r$ is $1$, swap ${\sf R}_{i,r}$ and ${\sf R}_{j,r}$.

  \noindent\hspace{6mm} 
  2.4. Apply the Hadamard gate $H$ and then the X gate on ${\sf S}_r$.

  \vspace{2mm}
  \noindent\hspace{3mm} 
  3. Compute the AND of all the registers ${\sf S}_1$,$\ldots$,${\sf S}_t$ in a new 1-qubit register ${\sf S}$.

  \vspace{2mm}
  \noindent\hspace{3mm} 
  4. Measure Register ${\sf S}$ in the computational basis.
  \vspace{2mm}
   \end{minipage}
  }
\end{center}\vspace{-2mm}
\caption{Description of the referee's procedure for Lemma \ref{lemma:MEQ}.}\label{fig:verif}
\end{figure}

The referee implements the procedure of Figure \ref{fig:verif}. Note that the conversion at Step 1 can be done locally by the referee since the referee knows $y$ and $z$ (remember that we are using linear quantum fingerprints, for which such a conversion is possible). Also note that Step 2 essentially implements, for each $r$, the SWAP test on registers $({\sf R}_{i,r},{\sf R}_{j,r})$. The only difference with the SWAP test described in Section~\ref{sec:prelim} (and Appendix \ref{Appendix:SWAP}) is that Register ${\sf S}_r$ is not measured. Instead, the AND of all the Registers ${\sf S}_1$,$\ldots$,${\sf S}_t$ is computed in a new register, which is then measured.
 
\subparagraph*{Analysis of the referee's procedure.} 
We now analyze the success probability of the procedure. First assume that $\MEQ_{k,n}(i, j, y, z)=1$. Then from Lemma \ref{lem:swap-test} we know that each SWAP test would output 1 with probability 1. This means that at Step 4, the measurement outcome is 1 with probability 1. 

Now assume that $\MEQ_{k,n}(i, j, y, z)=0$. 
By Lemma~\ref{lem:swap-test}, for each $r\in\{1,\ldots,t\}$,
the SWAP test on Registers $({\sf R}_{i,r},{\sf R}_{j,r})$, which has input $(|\Psi_{x_i}\rangle,|\Psi_{x_j}\rangle)$, 
would then output 1 with probability at most $\frac{5}{8}$.
Thus at Step 4, the measurement outcome is 0 with probability at least
$
1-\left(\frac{5}{8}\right)^t\ge 1-\varepsilon,
$
where the inequality follows from our choice of $t$.

In both cases we thus have success probability at least $1-\varepsilon$, as desired.
\end{proof}

\subsection{Implementing an MEQ Decision Tree: Proof of Theorem \ref{th:main}}\label{sub:tree}
\subparagraph*{MEQ decision trees.}
We define the model of Modified Equality Query decision trees ($\MEQ_{k,n}$ decision trees, or $\MEQ$ decision trees when the parameters $k, n$ are clear from the context) and the computation associated with them as follows.

\begin{itemize}
\item The input consists of $k$ $n$-bit strings $X_1,\cdots,X_k$; the output is an element of a set $S$.
\item The computational process is described by a binary tree $\mathcal{T}$ in which each node has either 0 or 2 children. 
Each internal node of the tree (i.e., each node with 2 children) is labeled by a 4-tuple $(i,j,y,z)$ for some indices $i,j\in\{1,\ldots,k\}$ and some (known) strings $y,z\in\{0,1\}^n$.
Each leaf (i.e., each node with 0 child) is labeled by an element in $S$.
\item The computation proceeds as follows. We start at the root. At each internal node we proceed to the right child if $X_i\oplus y =X_j \oplus z$ and to the left child if $X_i\oplus y \neq X_j \oplus z$. When reaching a leaf, we stop and output the label of the leaf. 
\end{itemize}

Observe that for any input $X_1,\ldots,X_k$, the above computational process can be implemented using at most $D$ modified equality queries, where $D$ denotes the depth of $\mathcal{T}$. For a function $f\colon(\{0,1\}^n)^k\to S$, we say that $\mathcal{T}$ computes $f$ if the output of the computational process induced by $\mathcal{T}$ is equal to $f(X_1,\ldots,X_k)$ for any $X_1,\ldots,X_k\in\{0,1\}^n$. 

\subparagraph*{ \boldmath{ Converting $\MEQ$ decision trees into $\SMP$ protocols.}}
Here is our main theorem (repeated from the introduction).
\addtocounter{theorem}{-4}
\begin{theorem}[repeated]
	For any $n, k, D \geq 0$ and $\delta \in (0,1)$,
	any $\MEQ_{k,n}$ decision tree of depth $D$ can be implemented
	by a quantum $\SMP$ protocol that uses
	$O(k  (\log D + \log(1/\delta))\log n)$ qubits
	and has error probability at most $\delta$.
\end{theorem}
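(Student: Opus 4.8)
The plan is to start by boosting the accuracy of the single-query protocol of Lemma~\ref{lemma:MEQ}. We will have each player $\ell$ send $t = \Theta(\log D + \log(1/\delta))$ copies of its linear quantum fingerprint $\ket{\Psi_{x_\ell}}$, so that the total message length per player is $O((\log D + \log(1/\delta))\log n)$ qubits and the overall communication is $O(k(\log D + \log(1/\delta))\log n)$ qubits, as claimed. By the analysis in Lemma~\ref{lemma:MEQ}, a single $\MEQ$ query evaluated from these fingerprints via the AND-of-SWAP-tests measurement returns the correct answer with probability at least $1 - (5/8)^t$; choosing $t$ large enough makes this at least $1 - \varepsilon$ for $\varepsilon = \delta/(4D)$.

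Next I would describe how the referee walks the decision tree. The referee holds all $k$ fingerprint bundles. Starting at the root, for each node visited, the referee forms the two transformed bundles $\ket{\Psi_{x_i \oplus y}}^{\otimes t}$ and $\ket{\Psi_{x_j \oplus z}}^{\otimes t}$ (using the linearity property, and noting this is a reversible unitary the referee can undo afterward so no information is destroyed), and applies the 2-outcome measurement $\mathcal{M}$ from Figure~\ref{fig:verif} — introducing fresh ancillas $\mathsf{S}_1,\dots,\mathsf{S}_t$ and $\mathsf{S}$, computing the AND, and measuring $\mathsf{S}$. The outcome bit tells the referee which child to descend to; it then uncomputes the transformation of Step~1 before moving on. This way, at most $D$ such measurements $\mathcal{M}_1,\dots,\mathcal{M}_D$ are applied sequentially to (essentially) the same underlying state — the bundle of fingerprints — and the sequence of outcomes determines a root-to-leaf path; the referee outputs that leaf's label.

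The crux of the argument, and the step I expect to be the main obstacle, is showing that the sequential application of these $D$ measurements does not accumulate too much error, even though each measurement may slightly disturb the state used by the subsequent ones. This is exactly what the quantum union bound of Theorem~\ref{thm:Gao15} (Gao~\cite{Gao15}) handles: first convert each 2-outcome measurement $\mathcal{M}_i$ into a 2-outcome projective measurement as described in Appendix~\ref{Appendix:pm}; then consider the ``correct'' outcome string $s \in \{0,1\}^D$ defined by the path that the noiseless decision tree would take on the true inputs $x_1,\dots,x_k$. By the amplified single-query guarantee, $\mathcal{M}_i^{s_i}(\ket{\psi}) \ge 1 - \varepsilon$ for the relevant initial state $\ket{\psi}$ at each step, so Theorem~\ref{thm:Gao15} gives that the sequential process produces outcome string $s$ — hence the referee follows the correct path and outputs $f(x_1,\dots,x_k)$ — with probability at least $1 - 4\sum_{i=1}^D \varepsilon = 1 - 4D\varepsilon = 1 - \delta$.

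Two minor points need care in the write-up: first, one must confirm that the reversible pre-processing at Step~1 and its uncomputation leave the state exactly as Gao's framework requires (so that all $\mathcal{M}_i$ act on a common register in a consistent way), which is immediate since those are unitaries applied and then inverted around the projective measurement; and second, the branching structure means the path length may be shorter than $D$ on some inputs, but padding with dummy queries (or simply summing over the at-most-$D$ nodes actually visited) only helps the bound. Collecting the communication count and the $1-\delta$ success probability completes the proof.
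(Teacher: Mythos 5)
Your proposal is correct and follows essentially the same route as the paper: amplify each query to error $\varepsilon=\delta/(4D)$ by sending $t=\Theta(\log D+\log(1/\delta))$ fingerprint copies, convert the AND-of-SWAP-tests measurement into a 2-outcome projective measurement via the construction of Appendix~\ref{Appendix:pm}, and apply Gao's quantum union bound to the fixed outcome string of the correct root-to-leaf path to get success probability $1-4D\varepsilon=1-\delta$. The extra details you flag (uncomputing the Step~1 unitary, handling paths shorter than $D$) are already absorbed by the paper's conversion to projective measurements and its ``at most $D$ queries'' accounting.
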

\addtocounter{theorem}{4}

\begin{proof}
For each $r\in\{1,\dots,k\}$, $\ell$ sends to the referee the quantum state $F(x_\ell)$ specified by Lemma~\ref{lemma:MEQ} with 
$
\varepsilon=
  \frac{\delta}{4D}
.
$ 
The total communication cost is thus
$
O(k\log(1/\varepsilon)\log n)=O(k(\log D+\log(1/\delta)) \log n),
$
as claimed.

The referee then implements the computation induced by the $\MEQ_{k,n}$ decision tree, by starting from the root and then following the computational path. This requires making at most $D$ modified equality queries sequentially. 

Note that one individual modified equality query $\MEQ_{k,n}(i, j, y, z)$ can be implemented using the quantum states $F(x_i)$ and $F(x_j)$ received from player $i$ and player $j$. By using the procedure of Figure \ref{fig:verif} on these two quantum states, the success probability is at least $1- \frac{\delta}{4D}$. 

The main issue is that the procedure of Figure \ref{fig:verif} modifies the quantum states $F(x_i)$ and $F(x_j)$, which prevents reusing them for implementing the next modified equality query. To solve this issue, we convert the procedure of Figure~\ref{fig:verif} (which corresponds to a 2-outcome non-projective measurement) into a 2-outcome projective measurement using the conversion process mentioned in Section~\ref{sec:prelim} and described in Appendix~\ref{Appendix:pm}. Since this conversion preserves the success probability of the measurement, the success probability of the 2-outcome projective measurement that we obtain is at least $1-\frac{\delta}{4D}$. We implement the modified equality queries on the computation path by applying the corresponding 2-outcome projective measurements sequentially. Theorem \ref{thm:Gao15} shows that the overall success probability is at least
$
1-4D\left(1-\left(1-\frac{\delta}{4D}\right)\right)= 1-\delta,
$
as claimed.
\end{proof}


\section{Applications}
\label{sec:app}
In this section we present several applications of our compiler, ranging from statistical problems to graph problems.

\subsection{Warm-up: Grouping By Equality}

The simplest application of our compiler is to efficiently group the players by input, so that all players with the same input are in the same group:
formally, the $\prob{GroupByEQ}_{k,n}$ problem requires the referee to output a partition $P_1,\ldots,P_{s}$
of $[k]$,
such that for every $i, j \in [k]$,
there is an index $t$ such that $i,j \in P_t$ if and only if $x_i = x_j$, where $x_1,\ldots,x_k\in\{0,1\}^n$ are the players' inputs.

\begin{theorem}\label{thm:geq}
There exists a bounded-error $\SMP$
quantum protocol for $\prob{GroupByEQ}_{k,n}$ with communication cost 
$O(k\log k\log n)$.
\end{theorem}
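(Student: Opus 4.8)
The plan is to reduce $\prob{GroupByEQ}_{k,n}$ to the evaluation of an $\MEQ_{k,n}$ decision tree of small depth and then invoke Theorem~\ref{th:main}. The partition into groups is completely determined by the answers to the $\binom{k}{2}$ ``plain'' equality queries $\MEQ_{k,n}(i,j,0^n,0^n)$, which ask whether $x_i = x_j$: two players are in the same group precisely when this query returns~$1$. So I would build a decision tree $\mathcal{T}$ that simply asks all $\binom{k}{2}$ queries one after another, in some fixed order, regardless of the answers. Concretely, $\mathcal{T}$ is a ``comb'': at depth $d$ (for $d = 1,\ldots,\binom{k}{2}$) every internal node is labeled by the $d$-th pair $(i,j)$ in the fixed ordering, with label $(i,j,0^n,0^n)$; both children lead to the node that asks the next pair. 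After all $\binom{k}{2}$ queries have been asked, each leaf corresponds to a full assignment of answers to all pairs, and we label that leaf by the partition obtained from the equivalence relation generated by the ``equal'' pairs. (One should note that not every answer vector is consistent — equality is transitive — but this causes no problem: inconsistent leaves are simply unreachable, and we may label them arbitrarily.) This tree has depth $D = \binom{k}{2} = O(k^2)$, so $\log D = O(\log k)$.

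Applying Theorem~\ref{th:main} with this tree and with error parameter $\delta = 1/3$ yields a quantum $\SMP$ protocol computing $\prob{GroupByEQ}_{k,n}$ with error probability at most $1/3$ and communication cost
\[
O\bigl(k(\log D + \log(1/\delta))\log n\bigr) = O\bigl(k(\log k + O(1))\log n\bigr) = O(k\log k\log n),
\]
which is exactly the claimed bound, and the protocol is bounded-error as required.

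I expect this proof to be essentially routine; there is no serious obstacle once Theorem~\ref{th:main} is in hand. The only points that require a word of care are (i) confirming that the output of $\prob{GroupByEQ}_{k,n}$ is a deterministic function of the $\binom{k}{2}$ equality bits, so that it legitimately fits the $\MEQ$ decision-tree model with its fixed output set $S$ (here $S$ is the set of partitions of $[k]$); and (ii) observing that the decision tree need not be adaptive at all — a fixed sequence of queries suffices — so the depth is exactly the number of pairs, giving the clean $\log D = O(\log k)$ bound. Everything else is inherited directly from the compiler of Theorem~\ref{th:main}.
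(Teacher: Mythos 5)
Your proposal is correct and follows essentially the same route as the paper: reduce $\prob{GroupByEQ}_{k,n}$ to an $\MEQ_{k,n}$ decision tree of depth $\binom{k}{2}$ using plain equality queries $\MEQ_{k,n}(i,j,0^n,0^n)$, then apply Theorem~\ref{th:main}. The only (immaterial) difference is that your tree is non-adaptive, whereas the paper's tree prunes redundant comparisons along some branches; the worst-case depth, and hence the bound, is identical.
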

\begin{proof}
	The $\prob{GroupByEQ}_{k,n}$ problem can be solved by an $\MEQ_{k,n}$ decision tree
	of depth $\binom{k}{2}$,
	where on each path we compare players' inputs against one another until we arrive at the correct output partition.
	See Figure~\ref{fig:geq} for an example with $k = 3$ players.
	Note that not all paths have the same length, as sometimes we can deduce the answer without comparing all inputs against one another; for example, if we learn that $x_1 = x_2$, then we no longer need to compare $x_2$ against the other inputs, as the query answers we obtain for $x_1$ imply the answers for $x_2$.
	The longest path is the leftmost path, where all queries return~0 (``not equal''), and the length of this path is exactly $\binom{k}{2}$.

  The conclusion then follows from Theorem~\ref{th:main}. 
\end{proof}

	\begin{figure}
		\begin{center}
			\begin{tikzpicture}[scale=0.95,every node/.style={align=center}]
			\tikzset{level distance=40pt,sibling distance=20pt}
				\Tree [.\node[draw]{$x_1 = x_2$?};
					      \edge node[auto=right] {$0$};  
					      [.\node[draw]{$x_1 = x_3$?};
					      		\edge node[auto=right] {$0$};  
						      [.\node[draw]{$x_2 =  x_3$?};
					      			\edge node[auto=right] {$0$};  
								[.\node[draw,font=\scriptsize]{$\set{1},\set{2},\set{3}$}; ]
					      			\edge node[auto=left] {$1$};  
								[.\node[draw,font=\scriptsize]{$\set{1},\set{2,3}$}; ]
							]
					      		\edge node[auto=left] {$1$};  
							[.\node[draw,font=\scriptsize]{$\set{1,3}, \set{2}$};]
						] 
					      \edge node[auto=left] {$1$};  
					      [.\node[draw]{$x_1 = x_3$?};
					      		\edge node[auto=right] {$0$};  
							[.\node[draw,font=\scriptsize]{$\set{1,2},\set{3}$};]
					      		\edge node[auto=left] {$1$};  
							[.\node[draw,font=\scriptsize]{$\set{1,2,3}$};]
						] 
					]
\end{tikzpicture}
\end{center}
\caption{An $\MEQ_{k,n}$ decision tree for $\prob{GroupByEQ}_{k,n}$ with $k = 3$ players. Each inner node is labeled with a query of the form ``$x_i = x_j$?'', which is short-hand notation for $\MEQ_{k,n}(i, j, 0^n, 0^n)$. The leaves are labeled with output partitions.}
\label{fig:geq}
\end{figure}
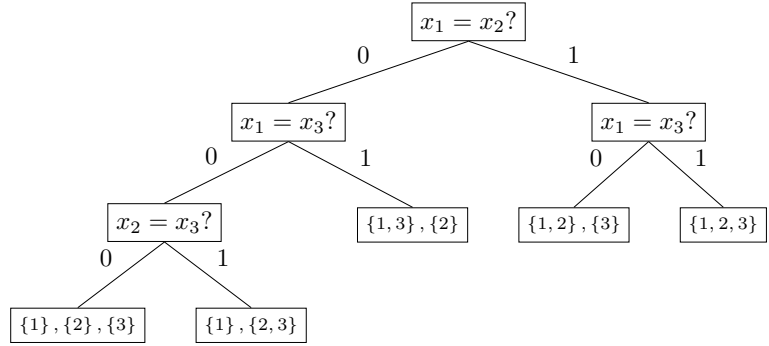

Using our protocol for $\prob{GroupByEQ}_{k,n}$ we can immediately solve several related problems.
First, we use it to solve the $\prob{AllEQ}_{k,n}$ and $\prob{ExistsEQ}_{k,n}$ problems, which ask us to determine whether all inputs are the same, or whether there exist two players that have the same input, respectively.
%
Ref.~\cite{FOZ20} showed that for any constant $\varepsilon>0$, 
the classical communication costs of $\prob{AllEQ}_{k,n}$ and $\prob{ExistsEQ}_{k,n}$ in the private-coin $\SMP$ model
are $\tilde{\Theta}(\sqrt{kn} + k)$ and $\tilde{\Theta}(k\sqrt{n})$, respectively.
Both problems reduce trivially to $\prob{GroupByEQ}_{k,n}$;
therefore Theorem~\ref{thm:geq} implies a quantum $\SMP$ protocol with communication cost $O(k \log k \log n)$
for both problems,\footnote{We remark that for $\mathrm{AllEQ}_{k,n}$, this can be further improved to $O(k\log n)$ by using the permutation test \cite{BBD+97,BCWW01,KNY08} instead of the SWAP test that we use here.}
 which is an exponential improvement in the dependence on $n$.
%
 More generally, for any $p \geq 0$,
	 we can compute the $p$-th frequency moment of the input,
		 $F_p = \sum_{w \in \set{0,1}^n} \left( f_w \right)^p$,
	 where $f_w$ is the frequency of the string $w$ in the input (i.e., the number of players whose input is $w$).
	 The case $p = 0$ corresponds to counting the number of distinct inputs.

	 Finally, we can use our protocol for grouping by equality
	 to solve
	 $P_3$-induced subgraph freeness:
		this problem
is set in the NIH network model (as explained in Section~\ref{sec:prelim}),
and requires us to determine whether the input graph contains an induced path consisting of two edges ($P_3$).
		As observed in~\cite{KMRS15}, a graph $G$ is $P_3$-induced subgraph free
		if and only if~$G$ is a collection of node-disjoint cliques.
		This can be tested by grouping the nodes of the graph
		using the input $\nu_v \oplus e_v$ (that is, the characteristic vector of $N(v) \cup \set{v}$)
		for each node $v$,
		and then checking if for each node $v \in V$,
		the number of nodes grouped together with $v$ (excluding $v$ itself)
		is exactly $\deg(v)$.
		To implement this test, we have each node send its degree to the referee,
		and then apply our protocol for $\prob{GroupByEQ}_{k,k}$
		to the vectors $\set{ \nu_v \oplus e_v }_{v \in V}$.
		The total communication cost is $O(k \log^2 k)$ qubits, 
		nearly matching the cost of the \emph{public coin} classical protocol from~\cite{KMRS15}.

\subsection{Neighborhood Diversity}

Our next application is to computing \emph{neighborhood diversity}~\cite{Lam12},
a graph parameter that is used in fixed-parameter tractability to measure the density of a graph
(in the same way that \emph{treewidth}, \emph{cliquewidth}, and other parameters are sometimes used).

The following definition is stated in the terminology of twins, for the sake of consistency
with the remainder of the paper, although this is not the terminology used in~\cite{Lam12}:
\begin{definition}[\cite{Lam12}]
	A graph $G = (V,E)$ has \emph{neighborhood diversity $d$}
	if its nodes can be partitioned into $d$ sets
	but no fewer,
	such that all nodes in each set are twins (false or true) of one another.
\end{definition}

We design an efficient quantum protocol for computing the neighborhood diversity of a graph,
based on Proposition~\ref{prop:twins} from Section~\ref{sec:prelim} (which is similar to what is used in, e.g.,~\cite{MPRT20}):

\begin{theorem}
  In the NIH network model, 
  there exists a bounded-error quantum protocol for computing the neighborhood diversity
  with communication cost $O(k\log^2 k)$.
\end{theorem}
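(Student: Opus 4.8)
The plan is to reduce neighborhood diversity to a grouping-by-equality computation, exactly mirroring the $P_3$-freeness application. The key observation, which follows from Proposition~\ref{prop:twins}, is that being twins (true or false) is an equivalence relation on the nodes, and the neighborhood diversity $d$ is precisely the number of equivalence classes. First I would have each player $v$ send two quantum fingerprints to the referee, one for $\nu_v$ (which detects false twins, since $u,v$ are false twins iff $\nu_u = \nu_v$) and one for $\nu_v \oplus e_v$ (which detects true twins, since $u,v$ are true twins iff $\nu_u \oplus e_u = \nu_v \oplus e_v$). Concretely this is just two invocations of the $\MEQ_{k,k}$ machinery: the query ``are $u,v$ false twins?'' is $\MEQ_{k,k}(u,v,0^k,0^k)$ applied to the $\nu$-vectors, and ``are $u,v$ true twins?'' is $\MEQ_{k,k}(u,v,e_u,e_v)$ applied to the same $\nu$-vectors (here the modifying strings $e_u, e_v$ are exactly what make modified equality queries, rather than plain equality, the right primitive).

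The second step is to build an $\MEQ_{k,k}$ decision tree of depth $O(k^2)$ that computes the twin-partition: starting from the singleton partition, go through all pairs $\{u,v\}$ and, for each, issue the query ``are $u$ and $v$ twins?'' — which unfolds into at most two $\MEQ$ queries (first test false-twin equality $\nu_u = \nu_v$; if that fails, test true-twin equality $\nu_u \oplus e_u = \nu_v \oplus e_v$) — and merge their classes if the answer is yes. As in the $\prob{GroupByEQ}$ tree, many paths are shorter because once two nodes are known to be twins we can skip redundant comparisons, but the longest path has length $O(k^2)$, so $D = O(k^2)$ and $\log D = O(\log k)$. Since the input length here is $n = k$, Theorem~\ref{th:main} with $\delta$ a small constant gives a protocol using $O(k(\log D + \log(1/\delta))\log n) = O(k \cdot \log k \cdot \log k) = O(k\log^2 k)$ qubits. (The constant-factor overhead of sending fingerprints for both $\nu_v$ and $\nu_v \oplus e_v$ instead of just one string is absorbed in the big-$O$.) Finally, the referee reads off $d$ as the number of blocks in the resulting partition and outputs it.

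The step I expect to require the most care is verifying that the twin relation genuinely is an equivalence relation so that the greedy merging in the decision tree is correct — i.e., that transitivity holds across the two different ``flavors'' of twinship. One must check that, for instance, if $u,v$ are false twins and $v,w$ are true twins, the situation is consistent and $u,w$ end up in the same class with the expected neighborhood structure; this is a short combinatorial argument on neighborhood sets (and is implicit in the neighborhood-diversity literature such as~\cite{Lam12,MPRT20}), but it is the one genuinely non-mechanical ingredient. Everything else — the fingerprint construction, the amplification, the gentle-measurement re-use of the fingerprints across the $O(k^2)$ queries, and the quantum union bound — is supplied verbatim by Lemma~\ref{lemma:MEQ} and Theorem~\ref{th:main}.
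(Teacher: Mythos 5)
Your proposal is correct and takes essentially the same route as the paper: Proposition~\ref{prop:twins} turns twin-testing into the two queries $\MEQ_{k,k}(u,v,0^k,0^k)$ and $\MEQ_{k,k}(u,v,e_u,e_v)$, the grouping is done exactly as for $\prob{GroupByEQ}$ with a decision tree of depth $2\binom{k}{2}$, and Theorem~\ref{th:main} with constant $\delta$ gives the $O(k\log^2 k)$ bound. The transitivity worry you flag does check out (each flavor of twinship is equality of a derived vector, so is transitive on its own, and a mixed chain such as $u,v$ false twins with $v,w$ true twins is impossible since it forces $u\in N(w)\setminus\{v\}$ but $u\notin N(v)\setminus\{w\}$), so the greedy merging is sound --- a point the paper's proof silently takes for granted.
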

\begin{proof}
	Proposition~\ref{prop:twins} 
	shows that nodes $v, w \in V$ are twins (false or true) if and only if
	$\nu_v = \nu_w$ or $\nu_v \oplus e_v = \nu_w \oplus e_w$.
	To compute the neighborhood diversity of $G$,
	we group nodes into sets of twins in much the same way that we used to solve $\prob{GroupByEQ}_{k,n}$ above,
	except that to determine whether two nodes $v,w$ are twins
	we need two queries:
	$\MEQ_{k,k}(\nu_v, \nu_w, 0^k, 0^k)$
	and $\MEQ_{k,k}(\nu_v, \nu_w, e_v, e_w)$.
	The resulting decision tree has depth $2\binom{k}{2}$,
	and the protocol then follows by Theorem~\ref{th:main}.
\end{proof}

We remark that computing neighborhood diversity can actually be done using ``plain'' equality queries alone (we do not need modified equality queries):
we can implement the tests above by having each node $v \in V$ send the referee a quantum fingerprint of its neighborhood $\nu_v$,
and also of $\nu_v \oplus e_v$.
However, this would fall outside the framework for our compiler, so it is simpler here to use modified equality queries and apply Theorem~\ref{th:main}.

\subsection{Reconstruction of Distance-Hereditary Graphs}

The \emph{reconstruction} task in the NIH network model requires the referee to output the entire input graph $G$.
For information-theoretic reasons, general graphs require a total of $\Theta(k^2)$ communication to reconstruct,
as this is the number of bits needed to represent an arbitrary graph on $k$ nodes.
However, for special classes of graphs, we can sometimes do much better:
for example,~\cite{KMRS15} showed that there is an efficient classical public-coin $\SMP$ protocol 
which reconstructs the input graph if it is $P_4$-induced subgraph free,
and rejects if it is not $P_4$-induced subgraph free.
This was generalized in~\cite{MPRT20} to distance-hereditary and bounded modular-width graphs.
In this subsection, we show that there is an efficient quantum private-coin $\SMP$ protocol for reconstructing distance-hereditary graphs.
The protocols of~\cite{KMRS15,MPRT20} for reconstructing $P_4$-induced subgraph free graphs
and bounded modular-width graphs can be adapted in a similar manner.\vspace{-3mm}

\subparagraph*{Distance-hereditary graphs and their properties.}
A graph $G=(V,E)$ is called {\em distance-hereditary} if the distance between two nodes
$v,w$ belonging to the same connected component
in $G$ is preserved 
in any induced subgraph of $G$ that contains $v$ and $w$.
Distance-hereditary graphs are characterized by the existence of a decomposition called a \emph{twin-pendant node decomposition}---a sequence $(v_1,\cdots,v_k)$ of nodes of $G$,
such that for each $j\in [k-1]$, one of the following conditions is true:
\begin{enumerate}[(C1)]
\item $v_j$ is a pendant node in $G[\{v_{j},\cdots,v_k\}]$ 
\item $v_j$ has a true twin in $G[\{v_{j},\cdots,v_k\}]$.
\item $v_j$ has a false twin in $G[\{v_{j},\cdots,v_k\}]$.
\end{enumerate}

It is known that a graph is distance-hereditary if and only if it has a twin-pendant decomposition (see, e.g., \cite{Bandelt1986}).
Moreover, if the graph is distance-hereditary, then the twin-pendant decomposition can be computed by repeatedly
choosing an arbitrary node satisfying one of the three conditions and removing it from the graph~\cite{KMRS15}.
This forms the basis for the reconstruction protocol given in~\cite{KMRS15,MPRT20}.
The protocol of~\cite{KMRS15,MPRT20} is stated in terms of polynomials, but we observe that it actually relies on 
simple algebraic properties, and can be translated to work with binary strings (interpreted as vectors over the binary field $\F_2$), as we do next.
(This abstracts and simplifies the algorithm of~\cite{KMRS15,MPRT20}.)\vspace{-3mm}

\subparagraph*{An algebraic characterization of pendant nodes and twins.}
The key to reconstructing distance-hereditary graphs is to find a representation of the graph that allows
us to repeatedly:
\begin{enumerate}[(1)]
	\item 
		\emph{Find} a node satisfying one of the three conditions (C1)--(C3),
		and
	\item \emph{Remove} this node from the graph and update our representation accordingly.
\end{enumerate}

The requirement of Definition \ref{def:valid} below is an adaptation and simplification of a corresponding requirement from~\cite{KMRS15,MPRT20}.
It requires that the graph be represented by a collection of linearly-independent vectors,
one for each node,
such that the neighborhood of each node is the sum of the representations of its neighbors,
allowing us to later ``peel off'' nodes from the graph by ``subtracting'' their representations.

\begin{definition}\label{def:valid}
Let $G=(V,E)$ be a $k$-node graph and $\ell$ be a positive integer.
A family of vectors $m=\set{(a_v,b_v)}_{v\in V}$, where $a_v,b_v\in\F_2^{\ell}$ for each $v\in V$, is
\emph{a valid representation of $G$} (or \emph{valid for $G$}, for short)
if:
\begin{enumerate}
	\item $\set{ a_v }_{v \in V}$ are linearly independent over $\F_2^{\ell}$,
		that is, there is no non-empty subset $U \subseteq V$ such that $\bigoplus_{u \in U} a_u = 0^\ell$;
		and
	\item For each $v \in V$, we have
			$b_v = \bigoplus_{u \in N(v)} a_u.$
\end{enumerate}
\end{definition}

The linear independence requirement of Definition~\ref{def:valid} leads to an algebraic characterization of the concepts of pendant nodes and twins which will be crucial to our algorithm:
\begin{proposition}
	If $\set{ (a_v, b_v) }_{v \in V}$ is a valid representation of $G = (V,E)$,
	then for every two nodes $w \neq u$ in $G$,
\begin{enumerate}
	\item Node $w$ is pendant and has node $u$ as its only neighbor if and only if $b_w = a_u$;
	\item Nodes $w,u$ are false twins if and only if $b_w = b_u$;
	\item Nodes $w,u$ are true twins if and only if $b_w \oplus a_w = b_u \oplus a_u$.
\end{enumerate}
\label{prop:test}
\end{proposition}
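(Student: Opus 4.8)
The plan is to derive all three equivalences from one elementary consequence of linear independence. Since the vectors $a_v$ live in $\F_2^{\ell}$, for any two subsets $A,B\subseteq V$ one has $\bigoplus_{x\in A}a_x=\bigoplus_{x\in B}a_x$ if and only if $\bigoplus_{x\in A\triangle B}a_x=0^{\ell}$ (with $A\triangle B$ the symmetric difference), and by condition~1 of Definition~\ref{def:valid} this in turn holds if and only if $A\triangle B=\emptyset$, i.e.\ $A=B$. I would combine this with the defining identity $b_v=\bigoplus_{u\in N(v)}a_u$ of condition~2, together with its closed-neighborhood variant $b_v\oplus a_v=\bigoplus_{u\in N(v)\cup\{v\}}a_u$ --- valid because the graph has no self-loops, so $v\notin N(v)$ and the union $N(v)\cup\{v\}$ is disjoint --- so that each clause of the proposition becomes a statement about equality of two neighborhood sets.

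Concretely, for part~2 I would rewrite $b_w=b_u$ via the defining identity as $\bigoplus_{x\in N(w)}a_x=\bigoplus_{x\in N(u)}a_x$, which by the observation above is equivalent to $N(w)=N(u)$; and since $w\neq u$, the equality $N(w)=N(u)$ already forces $w\notin N(u)$ (otherwise $w\in N(w)$, a self-loop), so it is exactly the false-twin condition. For part~3 the same computation with the closed-neighborhood variant shows $b_w\oplus a_w=b_u\oplus a_u$ is equivalent to $N(w)\cup\{w\}=N(u)\cup\{u\}$, which in particular places $w$ in $N(u)\cup\{u\}$ and hence (as $w\neq u$ and $u\notin N(u)$) in $N(u)$, so $w$ and $u$ are adjacent; and $N(w)\cup\{w\}=N(u)\cup\{u\}$ is precisely the closed-neighborhood characterization of true twins recorded in Section~\ref{sec:prelim} (equivalently $N(w)\setminus\{u\}=N(u)\setminus\{w\}$). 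For part~1 I would rewrite $b_w=a_u$ as $\bigoplus_{x\in N(w)}a_x=\bigoplus_{x\in\{u\}}a_x$, equivalent to $N(w)=\{u\}$, i.e.\ $w$ is pendant with $u$ as its unique neighbor.

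The argument is essentially mechanical, so I do not expect a genuine obstacle; the only point needing care is bookkeeping around the no-self-loop convention --- checking that $N(v)\cup\{v\}$ is a disjoint union so that $b_v\oplus a_v$ really equals $\bigoplus_{x\in N(v)\cup\{v\}}a_x$, and verifying that the purely set-theoretic equalities extracted from linear independence automatically carry the non-adjacency (for false twins) and adjacency (for true twins) side conditions built into the definitions. No property of a valid representation beyond this single consequence of linear independence is used.
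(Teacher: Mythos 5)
Your proposal is correct and follows essentially the same route as the paper, which justifies the proposition by exactly the observation you isolate: for linearly independent $\set{a_v}_{v\in V}$ over $\F_2^{\ell}$, $\bigoplus_{u\in S}a_u=\bigoplus_{u\in T}a_u$ iff $S=T$, after which each clause reduces to an equality of (open or closed) neighborhood sets. Your extra bookkeeping on the adjacency/non-adjacency side conditions matches the remarks already made before Proposition~\ref{prop:twins} in Section~\ref{sec:prelim}.
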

	This follows from the following property of linearly-independent sets:
	if $\set{ a_v }_{v \in V}$ is a linearly-independent set of vectors over $\F_2^{\ell}$,
	then
	for any two sets $S, T \subseteq V$
	we have 
	$\bigoplus_{u \in S} a_u = \bigoplus_{u \in T} a_u$
	if and only if
	$S = T$.

Our algorithm will work with a valid representation of the graph,
and modify it as it goes along.
The initial representation we will use is the following:
\begin{proposition}
	The representation $\set{ (e_v, \nu_v) }_{v \in V}$ is valid for $G = (V,E)$.
	\label{prop:initial}
\end{proposition}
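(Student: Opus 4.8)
The plan is to verify directly that the two requirements of Definition~\ref{def:valid} hold for the family $\set{(e_v,\nu_v)}_{v\in V}$, with parameter $\ell = k$. The first requirement asks that $\set{e_v}_{v\in V}$ be linearly independent over $\F_2^k$; but $\set{e_v}_{v\in V}$ is exactly the standard basis of $\F_2^k$, so this is immediate — for any non-empty $U\subseteq V$, the vector $\bigoplus_{u\in U} e_u$ is the characteristic vector of $U$, which is non-zero precisely because $U\neq\emptyset$. The second requirement asks that for each $v\in V$ we have $\nu_v = \bigoplus_{u\in N(v)} e_u$. This is again just the definition of the neighborhood vector: $\bigoplus_{u\in N(v)} e_u$ is the characteristic vector of the set $N(v)$, which is $\nu_v$ by definition. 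So both conditions hold and $\set{(e_v,\nu_v)}_{v\in V}$ is valid for $G$.

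Concretely, I would write the proof in two short sentences, one per condition, matching the roles $a_v = e_v$ and $b_v = \nu_v$ in Definition~\ref{def:valid}. For condition~1, state that $\set{e_v}_{v\in V}$ is the standard basis of $\F_2^k$ and hence linearly independent (equivalently, $\bigoplus_{u\in U} e_u$ has a $1$ in every coordinate indexed by $U$, so it is $0^k$ only when $U=\emptyset$). For condition~2, observe that $\bigoplus_{u\in N(v)} e_u$ is by definition the characteristic vector of $N(v)$, which is exactly $\nu_v$.

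There is no real obstacle here: this proposition is a routine sanity check that the natural encoding of the graph fits the abstract framework of Definition~\ref{def:valid}, and serves only as the base case for the peeling algorithm that follows. The only thing to be mildly careful about is bookkeeping — making sure the pairing is $(a_v,b_v) = (e_v,\nu_v)$ and not the reverse, since swapping them would break linear independence (the vectors $\nu_v$ are generally not independent). Once the correspondence is fixed, both verifications are one line each.
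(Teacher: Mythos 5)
Your proof is correct and matches the paper's own (one-line) argument exactly: verify linear independence of the standard basis vectors $\set{e_v}_{v\in V}$ and the identity $\nu_v = \bigoplus_{u\in N(v)} e_u$. Nothing further is needed.
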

\begin{proof}
	Indeed, $\set{ e_v}_{v \in V}$ are linearly independent,
	and $\nu_v = \bigoplus_{u \in N(v)} e_u$ for each $v \in V$.
\end{proof}
 
Next we show how to modify a valid representation after removing a node $w$ from the graph,
so that we obtain a valid representation for the remainder of the graph (the proof is very similar to \cite{KMRS15,MPRT20}, and is omitted here):
\begin{lemma}
	Let $m= \set{(a_v,b_v)}_{v\in V}$ be valid for $G=(V,E)$
	and let $w \neq u$ be nodes in $G$.
	We can obtain a valid representation $m'= \set{(a'_v,b'_v)}_{ v\in V\setminus\{w\} }$ for $G - w$
	as follows:
	\begin{enumerate}[I.]
		\item If $w$ is pendant and $u$ is its only neighbor: then for all $v \in V \setminus \set{w}$,
\[
a'_v=a_v
\hspace{5mm}
\textrm{and}
\hspace{5mm}
b'_v=\left\{
\begin{array}{ll}
b_v & (v\neq u),\\
b_u\oplus a_w & (v=u).
\end{array}
\right.
\]  
		\item If $w, u$ are false twins: then for all $v \in V \setminus \set{w}$,
\[
a'_v=\left\{
\begin{array}{ll}
a_v & (v\neq u),\\
a_u\oplus a_w & (v=u),
\end{array}
\right.
\hspace{5mm}
\textrm{and}
\hspace{5mm}
b'_v=b_v.
\]
		\item If $w, u$ are true twins: then for all $v \in V \setminus \set{w}$,
\[
a'_v=\left\{
\begin{array}{ll}
a_v & (v\neq u),\\
a_u\oplus a_w & (v=u),
\end{array}
\right.
\hspace{5mm}
\textrm{and}
\hspace{5mm}
b'_v=\left\{
\begin{array}{ll}
b_v & (v\neq u),\\
b_u\oplus a_w & (v=u).
\end{array}
\right.
\] 
	\end{enumerate}

	\label{lemma:update}
\end{lemma}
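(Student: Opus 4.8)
The plan is to verify each of the three cases by checking the two defining properties of a valid representation for the graph $G-w$: linear independence of $\set{a'_v}_{v \in V \setminus \set{w}}$, and the neighborhood-sum condition $b'_v = \bigoplus_{u \in N_{G-w}(v)} a'_u$ for every $v \in V \setminus \set{w}$. Throughout I would use the algebraic characterizations from Proposition~\ref{prop:test}, which tell us what the hypothesis (``$w$ is pendant with neighbor $u$'', ``$w,u$ false twins'', ``$w,u$ true twins'') says about the original vectors $b_w, a_w, b_u, a_u$. Concretely, in Case~I we know $b_w = a_u$; in Case~II we know $b_w = b_u$; and in Case~III we know $b_w \oplus a_w = b_u \oplus a_u$, i.e. $b_w = b_u \oplus a_u \oplus a_w$.

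For linear independence, the point is that in each case the new $a$-vectors are obtained from the old ones either by leaving them unchanged (Case~I) or by replacing $a_u$ with $a_u \oplus a_w$ and deleting $a_w$ (Cases~II and III). In the first case independence is immediate since a subfamily of a linearly independent family is linearly independent. In the second, I would argue that any linear dependence among $\set{a'_v}$ would, after substituting $a'_u = a_u \oplus a_w$, yield a linear dependence among $\set{a_v}_{v \in V}$ involving $a_w$ with coefficient equal to the coefficient of $a'_u$ (and otherwise the same support), contradicting the independence of the original family; so the new family is independent as well. This is a short invertible-change-of-basis observation.

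The neighborhood-sum condition is the case-by-case computational heart. For a node $w$ removed from $G$, the neighborhoods change only for the former neighbors of $w$: if $v \notin N(w)$ then $N_{G-w}(v) = N(v)$, and if $v \in N(w)$ then $N_{G-w}(v) = N(v) \setminus \set{w}$. In Case~I, $w$ has the single neighbor $u$, so only $v = u$ needs checking: $\bigoplus_{x \in N(u) \setminus \set{w}} a'_x = \bigoplus_{x \in N(u)} a_x \oplus a_w = b_u \oplus a_w$, matching the prescribed $b'_u$; for all other $v$ nothing changes and $b'_v = b_v$ works. In Case~II ($N(u) = N(w)$, $u \notin N(w)$), removing $w$ affects exactly the nodes in $N(w) = N(u)$; for such a node $v$ we have $N_{G-w}(v) = N(v) \setminus \set{w}$, and since $v \in N(u)$ we have $u \in N(v)$, so the sum $\bigoplus_{x \in N(v) \setminus \set{w}} a'_x$ differs from $\bigoplus_{x \in N(v)} a_x = b_v$ by replacing $a_u$ with $a'_u = a_u \oplus a_w$ and dropping $a_w$, i.e. it equals $b_v \oplus a_u \oplus a_w \oplus a_w = b_v$ — wait, I would be careful here and track signs over $\F_2$: dropping $a_w$ contributes $\oplus a_w$ and replacing $a_u$ by $a_u \oplus a_w$ contributes another $\oplus a_w$, so the net change is $0$ and $b'_v = b_v$ is correct; nodes not in $N(w)$ are unaffected except possibly $u$ itself, for which $N_{G-w}(u) = N(u)$ (as $u \notin N(w)$) and $\bigoplus_{x \in N(u)} a'_x = b_u \oplus a_w \oplus a_w$? — again I must recheck: $u \in N(x)$ for $x \in N(u)$, but does the sum over $N(u)$ include the index $u$? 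No, $u \notin N(u)$. It does include $w$ since $w \in N(u) = N(w)$? No: $u \notin N(w)$ means $w \notin N(u)$. So $N_{G-w}(u) = N(u)$ and the sum is $\bigoplus_{x \in N(u)} a'_x$, which differs from $b_u$ only in that any $x = u$ term — none — so it equals $b_u$; good. Case~III is the combination: $u \in N(w)$ and $N(u) \setminus \set{w} = N(w) \setminus \set{u}$, so removing $w$ affects $u$ and all other common neighbors, and one checks $b'_u = b_u \oplus a_w$ (the pendant-style correction, since $w \in N(u)$) while for the other affected neighbors $v$ the two sign contributions again cancel to give $b'_v = b_v$. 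The main obstacle — really the only place to be careful — is this $\F_2$ bookkeeping in Cases~II and III: precisely identifying which nodes' neighborhoods change, whether $u$ is among the affected nodes, and confirming that the ``replace $a_u$ by $a_u \oplus a_w$'' substitution exactly compensates the ``drop $a_w$'' term. Since the proof is essentially the one in~\cite{KMRS15,MPRT20} transcribed from polynomials to $\F_2$-vectors, I would state the verification succinctly for one representative case and note that the others are analogous.
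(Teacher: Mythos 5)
Your verification is correct: the change-of-basis argument for linear independence and the case-by-case $\F_2$ bookkeeping for the neighborhood-sum condition (tracking exactly which nodes lose $w$ from their neighborhood and whether the ``replace $a_u$ by $a_u\oplus a_w$'' correction cancels the dropped $a_w$ term) is precisely the intended argument. The paper itself omits this proof, deferring to~\cite{KMRS15,MPRT20}, so your direct verification fills that omission in the expected way; the only cosmetic point is that you do not need Proposition~\ref{prop:test} here, since the lemma's hypotheses are already stated graph-theoretically and your computations only use the definitions of pendant nodes and twins.
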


Together, Propositions~\ref{prop:test}, \ref{prop:initial} and Lemma~\ref{lemma:update}
give rise to the following abstract protocol for computing a pendant-twin decomposition (or identifying that
the graph is not distance-hereditary).
We cannot efficiently implement this protocol in the $\SMP$ model, as it requires players to send very long messages,
but we will show that we can \emph{simulate} it using modified equality queries.
From the twin-pendant decomposition output by Algorithm~\ref{alg:abstract}
it is easy to reconstruct the entire graph (as in~\cite{KMRS15,MPRT20}).

\begin{algorithm}
\caption{Abstract Protocol for Computing a Pendant-Twin Decomposition}\label{alg:abstract}

\KwIn{The representation $\set{ (e_v, \nu_v) }_{v \in V}$ of the graph $G$}

Set $a_v \leftarrow e_v, b_v \leftarrow \nu_v$ for each $v \in V$ \;
Set $\mathit{decomp} \leftarrow \lambda$ (an empty sequence) \;
\While{$|V| \geq 2$}
{
	\uIf(\tcp*[f]{Pendant $w$ with neighbor $u$}){$\exists w, u$ such that $b_w = a_u$}
	{
		\label{line:pendant}
		Append $(\text{``pendant''}, w, u)$ to $\mathit{decomp}$ \;
		Apply update (I) from Lemma~\ref{lemma:update} and remove $w$ from $V$
	}
	\uElseIf(\tcp*[f]{False twins $w,u$}){$\exists w,v$ such that $b_w = b_u$}
	{
		\label{line:false-twins}
		Append $(\text{``false twin''}, w, u)$ to $\mathit{decomp}$ \;
		Apply update (II) from Lemma~\ref{lemma:update} and remove $w$ from $V$
	}
	\uElseIf(\tcp*[f]{True twins $w,u$}){$\exists w,u$ such that $b_w \oplus a_w = b_u \oplus a_u$}
	{
		\label{line:true-twins}
		Append $(\text{``true twin''}, w, u)$ to $\mathit{decomp}$ \;
		Apply update (III) from Lemma~\ref{lemma:update} and remove $w$ from $V$
	}
	\lElse
	{
		Output ``Graph is not distance-hereditary''
	}
}
	Output $\mathit{decomp}$
\end{algorithm}

\subparagraph*{Simulating the abstract protocol using modified equality queries.}
As we said above, we cannot actually afford to implement Algorithm~\ref{alg:abstract}:
sending full neighborhood vectors $\nu_v$ requires $n$ qubits,
so we cannot even send the initial representation $\set{ (e_v, \nu_v) }_{v \in V}$ to the referee.
Instead, the referee works with \emph{fingerprints} of the neighborhood vectors, 
and we use modified equality queries to implement the tests in 
Lines~\ref{line:pendant},~\ref{line:false-twins} and~\ref{line:true-twins}.

To simulate the updates performed in Lemma~\ref{lemma:update},
we rely on the following crucial property:
upon removing node $w$,
we modify the representation $\set{ (a_v, b_v) }_{v \in V}$
by adding $a_w$ to some vectors and leaving the others unchanged.
By induction on the number of updates, we therefore have:
\begin{proposition}
	After performing $t \geq 0$
	steps resulting in a partial decomposition $\mathit{decomp}$,
	the resulting representation 
$\set{ (a_v^t, b_v^t) }_{v \in V}$ of the remaining graph
can be written in the form
	\begin{align*}
		& a_v^t = e_v \oplus \bigoplus_{u \in A_v^t} e_u
		\qquad
		\text{and}
		\qquad
		b_v^t = \nu_v \oplus \bigoplus_{u \in B_v^t} e_u,
	\end{align*}
	where $A_v^t, B_v^t \subseteq V$ depend only on $\mathit{decomp}$.
	\label{prop:mod}
\end{proposition}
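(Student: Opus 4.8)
The plan is to prove this by induction on the number of update steps $t$, using Lemma~\ref{lemma:update} as the inductive step and Proposition~\ref{prop:initial} as the base case.

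\emph{Base case ($t = 0$).} Before any updates, the representation is $\set{(e_v, \nu_v)}_{v\in V}$ by Proposition~\ref{prop:initial}. Setting $A_v^0 = B_v^0 = \emptyset$ for every $v$, we have $a_v^0 = e_v = e_v \oplus \bigoplus_{u\in\emptyset} e_u$ and $b_v^0 = \nu_v = \nu_v \oplus \bigoplus_{u\in\emptyset} e_u$, so the claimed form holds with sets that depend (trivially) only on the empty decomposition.

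\emph{Inductive step.} Suppose after $t$ steps the representation $\set{(a_v^t, b_v^t)}_{v\in V}$ has the claimed form, with $A_v^t, B_v^t$ determined by $\mathit{decomp}$. The $(t+1)$-st step removes some node $w$ (with witness $u$) via one of the three updates in Lemma~\ref{lemma:update}; in every case the rule has the shape ``add $a_w^t$ to $a_u^{t}$ and/or to $b_u^{t}$, and leave all other vectors unchanged.'' Since $a_w^t = e_w \oplus \bigoplus_{x\in A_w^t} e_x$ by the induction hypothesis, adding $a_w^t$ to a vector of the form $e_v \oplus \bigoplus_{x\in A_v^t} e_x$ (respectively $\nu_v \oplus \bigoplus_{x\in B_v^t} e_x$) again produces a vector of the same shape: the index set is updated by taking the symmetric difference with $\{w\} \cup A_w^t$. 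Concretely, I would set $A_v^{t+1} = A_v^t$ and $B_v^{t+1} = B_v^t$ for all $v \neq u$, and then, depending on which of updates (I), (II), (III) was applied, set $A_u^{t+1} = A_u^t \,\triangle\, (\{w\}\cup A_w^t)$ and/or $B_u^{t+1} = B_u^t \,\triangle\, (\{w\}\cup A_w^t)$ (update (I) modifies only $B_u$, update (II) modifies only $A_u$, update (III) modifies both). Writing out the symmetric-difference bookkeeping gives exactly the claimed expressions for $a_u^{t+1}$ and $b_u^{t+1}$. Finally, since $A_w^t$ and $\mathit{decomp}$ after $t+1$ steps (which records $w$ and $u$ and the type of step) determine these new sets, the property ``$A_v^{t+1}, B_v^{t+1}$ depend only on $\mathit{decomp}$'' is preserved.

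I do not anticipate a genuine obstacle here: the statement is essentially a normal-form observation and the only mildly delicate point is keeping the symmetric-difference arithmetic straight across the three cases of Lemma~\ref{lemma:update}, together with noting that it is precisely $\{w\}\cup A_w^t$ (not just $\{w\}$) that gets XORed in, because $a_w^t$ itself is no longer a single basis vector after earlier updates. One should also remark that the indices in $A_v^t, B_v^t$ may range over all of $V$, including already-removed nodes; this causes no difficulty, since the referee only ever evaluates these vectors implicitly through $\MEQ$ queries on fingerprints, and the $e_u$ with $u$ removed are still well-defined vectors in $\F_2^k$.
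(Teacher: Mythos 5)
Your proposal is correct and follows the same route as the paper: the paper proves Proposition~\ref{prop:mod} by exactly this induction on the number of updates, observing that each step of Lemma~\ref{lemma:update} only XORs $a_w$ (itself a sum of standard basis vectors by the inductive hypothesis) into some of the vectors. Your write-up just makes the symmetric-difference bookkeeping explicit, which the paper leaves implicit.
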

This is important because the referee can explicitly construct 
	$\bigoplus_{u \in A_v^t} e_u$ and $\bigoplus_{u \in B_v^t} e_u$
	and use them as modifying vectors inside modified equality queries,
	as we show next:

\begin{theorem}
  In the NIH network model, 
  there is a bounded-error quantum protocol with communication cost $O(k \log^2 k)$ that enables the referee to reconstruct a distance-hereditary graph, or reject if the input graph is not distance-hereditary.
\end{theorem}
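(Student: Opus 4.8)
The plan is to take Algorithm~\ref{alg:abstract} and simulate it in the quantum $\SMP$ model by replacing the explicit (and overly long) representation $\set{(a_v,b_v)}_{v\in V}$ with quantum fingerprints, and replacing each of the three tests in Lines~\ref{line:pendant},~\ref{line:false-twins} and~\ref{line:true-twins} by modified equality queries. First, each node $v$ sends to the referee the information it needs: nothing beyond what Lemma~\ref{lemma:MEQ} (equivalently Theorem~\ref{th:main}) requires, namely the quantum message $F(\nu_v)$, where we take the input string of player $v$ to be its neighborhood vector $\nu_v \in \set{0,1}^k$. (We will also need $F$ applied to the degree, or we can simply encode $\deg(v)$ classically; this is a lower-order cost.) The key point, established by Proposition~\ref{prop:mod}, is that after $t$ peeling steps the current vectors have the shape $a_v^t = e_v \oplus \bigoplus_{u\in A_v^t} e_u$ and $b_v^t = \nu_v \oplus \bigoplus_{u\in B_v^t} e_u$, where $A_v^t,B_v^t$ depend only on the partial decomposition $\mathit{decomp}$ produced so far — and hence are fully known to the referee. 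So the referee can compute these correction vectors locally.

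Next I would translate each of the three tests of Proposition~\ref{prop:test} into an $\MEQ_{k,k}$ query. By Proposition~\ref{prop:test}, at step $t$ node $w$ is pendant with unique neighbor $u$ iff $b_w^t = a_u^t$, i.e.\ $\nu_w \oplus \bigoplus_{p\in B_w^t} e_p = e_u \oplus \bigoplus_{p\in A_u^t} e_p$; rearranging, this is exactly $\MEQ_{k,k}(w,u,\,\bigoplus_{p\in B_w^t}e_p,\; e_u\oplus\bigoplus_{p\in A_u^t}e_p)$ with the convention that one of the two ``inputs'' is $\nu_w$ and the other is the all-zero vector (or, more cleanly, note that $e_u$ is itself a neighborhood-free vector and absorb it as a modifying string applied to player $u$'s fingerprint of $\nu_u$ — since the referee only uses $F(\nu_u)$, and $e_u\oplus\bigoplus_{p\in A_u^t}e_p\oplus \nu_u$ is what the query actually compares, we pick the modifying strings so that the two sides become $\nu_w \oplus (\text{known}) $ and $\nu_u \oplus (\text{known})$). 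Similarly, ``$w,u$ false twins'' becomes the query $\MEQ_{k,k}(w,u,\bigoplus_{p\in B_w^t}e_p,\bigoplus_{p\in B_u^t}e_p)$, and ``$w,u$ true twins'' becomes $\MEQ_{k,k}(w,u,\bigoplus_{p\in B_w^t}e_p\oplus e_w\oplus\bigoplus_{p\in A_w^t}e_p,\;\bigoplus_{p\in B_u^t}e_p\oplus e_u\oplus\bigoplus_{p\in A_u^t}e_p)$. In each case the modifying strings are computable by the referee from $\mathit{decomp}$ alone, which is all Theorem~\ref{th:main} needs.

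Finally I would assemble everything into a single $\MEQ_{k,k}$ decision tree and invoke Theorem~\ref{th:main}. At each iteration the referee must find some pair $(w,u)$ passing one of the three tests, or declare ``not distance-hereditary''; there are fewer than $k^2$ candidate pairs per test and three tests, so one full iteration costs $O(k^2)$ queries, and there are at most $k$ iterations, giving a decision tree of depth $D = O(k^3)$ — equivalently $\log D = O(\log k)$. Plugging $D = \mathrm{poly}(k)$, $n = k$ and a constant error $\delta$ into Theorem~\ref{th:main} yields communication $O\bigl(k(\log D + \log(1/\delta))\log n\bigr) = O(k\log^2 k)$ qubits, as claimed. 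From the resulting twin-pendant decomposition the referee reconstructs $G$ by the standard procedure of~\cite{KMRS15,MPRT20} at no extra communication cost. I expect the main obstacle to be purely bookkeeping rather than conceptual: verifying carefully that at every step the modifying strings really are functions of $\mathit{decomp}$ only (this is Proposition~\ref{prop:mod}, so it is in hand), and that branching the decision tree on \emph{which} pair $(w,u)$ and \emph{which} rule (C1)--(C3) fires is consistent — i.e.\ that the tree's structure does not secretly depend on the (unknown to the referee) graph beyond what the query answers along the current path reveal. Since each query answer is deterministically a function of the input, and the referee's choice of the next query depends only on previous answers, this is exactly the $\MEQ$ decision tree model, and Theorem~\ref{th:main} applies directly.
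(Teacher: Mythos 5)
Your proposal is correct and follows essentially the same route as the paper: simulate Algorithm~\ref{alg:abstract} by maintaining the known modifier vectors (Proposition~\ref{prop:mod}), translate the three tests of Proposition~\ref{prop:test} into $\MEQ_{k,k}$ queries, and feed the resulting depth-$O(k^3)$ decision tree into Theorem~\ref{th:main} to get $O(k\log^2 k)$ qubits. One small caveat: for the pendant test, your primary formulation (comparing $\nu_w$ against a fingerprint of the all-zero string with the known vector $a_u^t$ as a modifier) is exactly what the paper does, but your parenthetical ``more cleanly'' alternative --- absorbing $a_u^t$ as a modifying string on player $u$'s fingerprint of $\nu_u$ --- does not work, since writing $a_u^t = \nu_u \oplus z$ requires $z$ to depend on the unknown $\nu_u$.
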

\begin{proof}
	The protocol is described in pseudocode in Algorithm~\ref{alg:concrete}.
	It is convenient to slightly abuse the notation by writing $\MEQ_{k,k}(i, y, z)$
	to denote the query ``$x_i \oplus y = z$?'', where $i \in [k]$ and $y, z \in \set{0,1}^k$.
	The referee can perform this query
	by computing a fingerprint for the vector $0^k$
	and then proceeding as shown in Section~\ref{sec:meq},
	using the fingerprint for player $i$'s input, the fingerprint for $0^k$, and the vectors $y,z$.

	In the protocol, the referee explicitly maintains the vectors $\set{ a_v }_{v \in V}$
	of the representation,
	and implicitly maintains the vectors $\set{ b_v }_{v \in V}$
	by storing \emph{modifier vectors} $\set{ c_v }_{v \in V}$,
	such that $b_v = \nu_v \oplus c_v$ for each $v \in V$
	(this is possible due to Proposition~\ref{prop:mod}).
	To simulate each test in Algorithm~\ref{alg:abstract}
	we use appropriate modified equality queries:
	for example, to simulate the test ``$b_w = a_u$?'' in line~\ref{line:pendant}
	of Algorithm~\ref{alg:abstract},
	we use the query $\MEQ_{k,k}(w, c_w, a_u)$ in line~\ref{line:meq-pendant} of Algorithm~\ref{alg:concrete},
	which checks whether $\nu_w \oplus c_w = a_u$.
	Since $b_w = \nu_w \oplus c_w$, this corresponds to exactly the same test.
  
\begin{algorithm}
\caption{Quantum $\SMP$ Protocol for Computing a Pendant-Twin Decomposition}\label{alg:concrete}

Set $a_v \leftarrow e_v, c_v \leftarrow 0^k$ for each $v \in V$ \;
Set $\mathit{decomp} \leftarrow \lambda$ (an empty sequence) \;

\While{$|V| \geq 2$}
{
	\uIf(\tcp*[f]{\!\!Pendant $w$ with neighbor $u$}){$\exists w, u$ such that $\MEQ_{k,k}(w, c_w, a_u) = 1$}
	{
		\label{line:meq-pendant}
		Append $(\text{``pendant''}, w, u)$ to $\mathit{decomp}$ and set $V \leftarrow V \setminus \set{w}$ \;
		Set $c_u \leftarrow c_u \oplus a_w$ \;
	}
	\uElseIf(\tcp*[f]{False twins $w,u$}){$\exists w,v$ such that $\MEQ_{k,k}(w, u, c_w, c_u) = 1$}
	{
		\label{line:meq-false-twins}
		Append $(\text{``false twin''}, w, u)$ to $\mathit{decomp}$ and set $V \leftarrow V \setminus \set{w}$ \;
		Set $a_u \leftarrow a_u \oplus a_w$ \;
	}
	\uElseIf(\tcp*[f]{True twins $w,u$}){$\exists w,u$ s.t.\ $\MEQ_{k,k}(w, u,  c_w \oplus a_w,c_u \oplus a_u) = 1$}
	{
		\label{line:meq-true-twins}
		Append $(\text{``true twin''}, w, u)$ to $\mathit{decomp}$ and set $V \leftarrow V \setminus \set{w}$ \;
		Set $a_u \leftarrow a_u \oplus a_w$ and $c_u \leftarrow c_u \oplus a_w$ \;
	}
	\lElse
	{
		Output ``Graph is not distance-hereditary''
	}
}
	Output $\mathit{decomp}$
\end{algorithm}
Each of the tests in lines~\ref{line:meq-pendant},~\ref{line:meq-false-twins} and~\ref{line:meq-true-twins}
of Algorithm~\ref{alg:concrete}
can be implemented using $\binom{k}{2}$ modified equality queries.
The other steps do not require any queries.
The whole procedure can thus be implemented by an $\MEQ_{k,k}$ decision tree of depth $(k-1)3\binom{k}{2}$;
  the quantum protocol can be constructed from Theorem~\ref{th:main}.
\end{proof}

\subsection{Enumeration of Isolated Cliques}
Finally, we turn our attention to the problem of enumerating all \emph{isolated cliques}
in a graph.
Isolated cliques and pseudocliques are important concepts in complex network analysis (see, e.g., \cite{IIO05,II09,KHMN09}).
Concretely, we show how to enumerate max-$d$-isolated cliques:

\begin{definition}[\cite{KHMN09}, Definition 3] 
	A subgraph $S$ of $G=(V,E)$ is a {\em max-$d$-isolated clique} 
if the subgraph induced by $S$ is a clique, and each node in $S$ has at most $d$ edges to $V \setminus S$.
\end{definition}

It is convenient to describe our protocol for enumerating max-$d$-isolated cliques using a new primitive
that we call \emph{modified bounded Hamming distance queries},
which compute the Hamming distance between two modified inputs,
but only if the distance does not exceed some fixed threshold $d$:
\begin{equation*}
  \prob{MHAM}_{n}^d(i,j,y,z)=
  \left\{
    \begin{array}{ll}
	    \Delta_n(x_i \oplus y, x_j \oplus z) & \textrm{if } \Delta_n(x_i\oplus y,x_j\oplus z)\le d,\\
    \perp & \textrm{otherwise}.
    \end{array}
    \right.
   \end{equation*}
   An $\prob{MHAM}_n^d$ query can be computed by an $\MEQ_{k,n}$ decision tree of depth $\sum_{c = 0}^d \binom{n}{c} = O(n^d)$,
   which checks, for all strings $e \in \set{0,1}^n$ of Hamming weight $c \leq d$,
   whether $\left( x_i \oplus y \right) \oplus e = x_j \oplus z$.%
   \footnote{
	   Yao \cite{Yao03} showed that for any constant $d$,
	one can test whether $\Delta_n(x,y) \leq d$
	   using $O(\log n)$ qubits in the two-party quantum $\SMP$
	   model.
	   However, unlike the equality function, it is not clear how to convert this protocol
	   into a protocol for \emph{modified} Hamming distance queries, as it does not have the linearity property
   that was crucial to prove Lemma~\ref{lemma:MEQ}.}
   %
   We note that the restriction to a fixed upper bound $d$ is important, as in general,
   computing the exact Hamming distance between two strings requires linear communication,
   even for interactive quantum protocols where the parties have shared entanglement~\cite{HSZZ06}.%
   \footnote{Specifically,~\cite{HSZZ06} shows that testing whether two $n$-bit strings have Hamming 
	   distance at most $d$ requires $\Omega(d)$ bits of communication, for any $d \leq n/2$;
   this implies that computing the exact Hamming distance between general $n$-bit strings
   requires $\Omega(n)$ qubits.}


   Our protocol for enumerating max-$d$-isolated cliques in the NIH network model will use modified bounded Hamming distance queries with $n=k$. The protocol is motivated by the following observations.
   The first observation allows us to use \emph{bounded} Hamming distance queries,
   as it shows that we do not need to worry about nodes with $\Delta_k(\nu_u \oplus e_u, \nu_v\oplus e_v) > 2d$:
   \begin{proposition}
	   If $S \subseteq V$ contains two nodes $u \neq v$ such that $\Delta_k(\nu_u \oplus e_u, \nu_v\oplus e_v) > 2d$,
				then $S$ cannot be a max-$d$-isolated clique.
				\label{prop:2d}
   \end{proposition}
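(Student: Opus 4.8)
The plan is to prove the contrapositive-style statement directly: assume $S$ is a max-$d$-isolated clique containing two distinct nodes $u,v$, and show $\Delta_k(\nu_u \oplus e_u, \nu_v \oplus e_v) \le 2d$. The quantity $\nu_u \oplus e_u$ is the characteristic vector of $N(u) \cup \set{u}$, so $\Delta_k(\nu_u \oplus e_u, \nu_v \oplus e_v)$ counts the size of the symmetric difference $(N(u) \cup \set{u}) \,\triangle\, (N(v) \cup \set{v})$. The first step is therefore to expand this symmetric difference and argue that, because $u,v$ lie in a common clique $S$, the only coordinates on which the two vectors can differ are the edges leaving $S$.

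Concretely, I would argue as follows. Since $G[S]$ is a clique and $u,v \in S$, every node $w \in S$ satisfies $w \in N(u) \cup \set{u}$ (either $w = u$, or $w$ is a neighbor of $u$ because $S$ is a clique), and likewise $w \in N(v) \cup \set{v}$. Hence on every coordinate $w \in S$ the vectors $\nu_u \oplus e_u$ and $\nu_v \oplus e_v$ agree (both equal $1$). Consequently the symmetric difference $(N(u) \cup \set{u}) \,\triangle\, (N(v) \cup \set{v})$ is contained in $V \setminus S$. Any element $w \in V \setminus S$ that lies in this symmetric difference is a neighbor of exactly one of $u,v$ (since $u \notin V\setminus S$ and $v \notin V \setminus S$, the singleton contributions $e_u,e_v$ are irrelevant on $V \setminus S$); that is, $w$ is an endpoint outside $S$ of an edge incident to $u$ or to $v$. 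So the symmetric difference injects into the set of edges from $\set{u,v}$ to $V \setminus S$.

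The final step is the counting bound. Since $S$ is a max-$d$-isolated clique, each of $u$ and $v$ has at most $d$ edges to $V \setminus S$. Therefore the number of coordinates $w \in V \setminus S$ on which $\nu_u \oplus e_u$ and $\nu_v \oplus e_v$ differ is at most (number of edges from $u$ to $V\setminus S$) $+$ (number of edges from $v$ to $V\setminus S$) $\le d + d = 2d$. Combined with the agreement on all coordinates in $S$, this gives $\Delta_k(\nu_u \oplus e_u, \nu_v \oplus e_v) \le 2d$, which is the contrapositive of the proposition. So if instead $\Delta_k(\nu_u \oplus e_u, \nu_v \oplus e_v) > 2d$, then $S$ cannot be a max-$d$-isolated clique, as claimed.

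I do not expect any real obstacle here; the statement is essentially a bookkeeping argument about symmetric differences of neighborhoods restricted to the clique. The only point requiring a little care is handling the $e_u,e_v$ terms correctly — making sure that the ``$+1$'' on the diagonal coordinate does not secretly create a discrepancy — but this is resolved by the observation that $u,v \in S$ so those coordinates lie in $S$ where the two vectors already agree (both are $1$, since $u$ and $v$ are adjacent in the clique, or equal to the node itself). A clean way to present the proof is simply to note that for $w \in S$ both vectors have a $1$ in coordinate $w$, and for $w \in V \setminus S$ the coordinate-$w$ entry of $\nu_u \oplus e_u$ is just $\nu_u[w]$ (and similarly for $v$), so the Hamming distance equals $\abs{\set{w \in V \setminus S : w \in N(u) \triangle N(v)}}$, which is at most the total number of edges from $\set{u,v}$ to $V \setminus S$, hence at most $2d$.
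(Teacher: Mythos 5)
Your proof is correct and follows essentially the same reasoning as the paper: the clique property forces the two closed-neighborhood vectors to agree on all coordinates in $S$, so the symmetric difference lives entirely in $V\setminus S$ and is bounded by the edges from $u$ and $v$ leaving $S$. The paper phrases it in the forward direction (a symmetric difference exceeding $2d$ forces one of $u,v$ to have more than $d$ neighbors outside $S$, by pigeonhole), while you take the contrapositive and sum the two bounds of $d$; these are the same counting argument.
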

   A symmetric difference of cardinality $>2d$ between $N(u) \cup \set{u}$ and $N(v) \cup \set{v}$
   implies that either 
   $N(u) \cup \set{u}$ contains more than $d$ nodes that are not in $N(v) \cup \set{v}$,
   or vice-versa.
   If $S$ is a clique, this means that either $u$ or $v$ has more than $d$ neighbors outside $S$,
   so $S$ is not a max-$d$-isolated clique;
   if $S$ is not a clique, then in particular it is not a max-$d$-isolated clique.

   The next observation is useful for checking whether a given set is a clique or not:
   \begin{proposition}
	   For every $u,v \in V$,
	   if $\set{u,v} \notin E$ then
	   $\Delta_k(\nu_u, \nu_v) = \Delta_k(\nu_u \oplus e_u, \nu_v \oplus e_v)-2$,
	   and
	   if $\set{u,v} \in E$ then
	   $\Delta_k(\nu_u, \nu_v) = \Delta_k(\nu_u \oplus e_u, \nu_v \oplus e_v) + 2$.
	\label{prop:edge_test}
   \end{proposition}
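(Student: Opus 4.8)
The plan is to express both Hamming distances as Hamming weights of XORs and then track the only two coordinates where the two expressions can possibly differ. Write $\mathrm{wt}(\cdot)$ for the Hamming weight of a vector in $\set{0,1}^k$, so that $\Delta_k(x,x')=\mathrm{wt}(x\oplus x')$ for all $x,x'$. First I would rewrite $\Delta_k(\nu_u\oplus e_u,\,\nu_v\oplus e_v)=\mathrm{wt}\big((\nu_u\oplus\nu_v)\oplus e_u\oplus e_v\big)$, so that the whole proposition reduces to comparing $\mathrm{wt}(\nu_u\oplus\nu_v)$ with the weight of the same vector after flipping coordinates $u$ and $v$ (and nothing else).

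Next I would inspect coordinates $u$ and $v$ of $\nu_u\oplus\nu_v$. Because $G$ has no self-loops we have $\nu_u[u]=\nu_v[v]=0$, hence $(\nu_u\oplus\nu_v)[u]=\nu_v[u]$ and $(\nu_u\oplus\nu_v)[v]=\nu_u[v]$; and by definition of the neighborhood vectors, $\nu_v[u]=\nu_u[v]=1$ if $\set{u,v}\in E$ and $=0$ otherwise. Thus both coordinates $u$ and $v$ of $\nu_u\oplus\nu_v$ equal $1$ exactly when $\set{u,v}\in E$, and equal $0$ exactly when $\set{u,v}\notin E$. Adding $e_u\oplus e_v$ flips precisely these two coordinates and leaves all the others unchanged, so it decreases the weight by $2$ when $\set{u,v}\in E$ and increases it by $2$ when $\set{u,v}\notin E$. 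Rearranging yields exactly the two claimed identities.

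An equivalent, perhaps more readable, way to phrase the argument is set-theoretic: since there are no self-loops, $\nu_v\oplus e_v$ is the characteristic vector of $N(v)\cup\set{v}$, so $\Delta_k(\nu_u\oplus e_u,\nu_v\oplus e_v)=\big|(N(u)\cup\set{u})\,\triangle\,(N(v)\cup\set{v})\big|$. If $\set{u,v}\notin E$, then $u\notin N(v)$ and $v\notin N(u)$, so neither $u$ nor $v$ lies in $N(u)\,\triangle\,N(v)$, whereas each of $u,v$ lies in exactly one of $N(u)\cup\set{u}$ and $N(v)\cup\set{v}$; hence the symmetric difference grows by $2$. If $\set{u,v}\in E$, then $u\in N(v)$ and $v\in N(u)$, so each of $u,v$ lies in $N(u)\,\triangle\,N(v)$ but in \emph{both} $N(u)\cup\set{u}$ and $N(v)\cup\set{v}$; hence the symmetric difference shrinks by $2$. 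In either case every other node $w\notin\set{u,v}$ belongs to $N(u)\cup\set{u}$ iff it belongs to $N(u)$, and likewise for $v$, so it contributes identically to both symmetric differences.

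I do not anticipate a genuine obstacle here: this is a short coordinate-bookkeeping argument, and the only point that must not be overlooked is the no-self-loop assumption, which is precisely what makes coordinates $u$ and $v$ of $\nu_u\oplus\nu_v$ record the adjacency of $u$ and $v$. (The argument is entirely in the spirit of Proposition~\ref{prop:twins}.)
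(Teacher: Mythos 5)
Your proof is correct, and your second (set-theoretic) formulation is essentially identical to the paper's own argument, which compares the symmetric differences $N(u)\ominus N(v)$ and $\left(N(u)\cup\set{u}\right)\ominus\left(N(v)\cup\set{v}\right)$ and observes they differ exactly in the two elements $u,v$. The coordinate-wise Hamming-weight version you give first is just a rephrasing of the same bookkeeping, and you correctly identify the no-self-loop assumption as the one point that matters.
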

   This is because if $\set{u,v} \notin E$,
   then $N(u) \ominus N(v) = \left(\left( N(u) \cup \set{u} \right) \ominus \left( N(v) \cup \set{v} \right)\right)\setminus\{u,v\}$
   (where $\ominus$ denotes the symmetric difference),
   whereas if $\set{u,v} \in E$,
   then
   $N(u) \ominus N(v) = \set{u,v} \cup  \left( N(u) \cup \set{u} \right) \ominus \left( N(v) \cup \set{v} \right)$.

   One implication of Proposition~\ref{prop:edge_test}
   is that we always have 
	   $\Delta_k(\nu_u, \nu_v) \leq \Delta_k(\nu_u \oplus e_u, \nu_v \oplus e_v) + 2$.
	   Together with Proposition~\ref{prop:2d},
	   this gives us an upper bound of $2d+2$ on the Hamming distance $\Delta_k(\nu_u, \nu_v)$
	   for nodes that might belong to the same max-$d$-isolated clique.

Our main result for computing max-$d$-isolated cliques is as follows:
\begin{theorem}
  In the NIH network model, 
  for any $d\ge 1$,
  there is a bounded-error quantum $\SMP$ protocol with communication cost $O(kd \log^2 k)$
  for enumerating all the max-$d$-isolated cliques of the input graph.
  \label{thm:cliques}
\end{theorem}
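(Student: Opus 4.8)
\subparagraph*{Proof plan for Theorem~\ref{thm:cliques}.}
The plan is to build an $\MEQ_{k,k}$ decision tree of depth $k^{O(d)}$ that simulates an explicit enumeration algorithm, and then invoke Theorem~\ref{th:main}. Each node $v$ sends its degree $\deg(v)$ (which costs $O(\log k)$ bits) together with a linear quantum fingerprint of $\nu_v$ as in Theorem~\ref{th:main}; since here $n=k$ and the tree we build has depth $D=k^{O(d)}$, the total communication is $O\big(k(\log D+\log(1/\delta))\log k\big)=O(kd\log^2 k)$ qubits for constant error $\delta$. The referee's algorithm uses only modified bounded Hamming distance queries $\prob{MHAM}_k^{2d}$ and $\prob{MHAM}_k^{2d+2}$, each of which (as noted before the theorem) expands into an $\MEQ_{k,k}$ decision subtree of depth $O(k^{2d+2})=k^{O(d)}$. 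Applied with the referee-constructible modifiers $0^k$ and $e_u,e_v$, these queries return $\Delta_k(\nu_u,\nu_v)$ (when it is $\le 2d+2$) and $\Delta_k(\nu_u\oplus e_u,\nu_v\oplus e_v)$ (when it is $\le 2d$), and $\perp$ otherwise. By Proposition~\ref{prop:edge_test} we have $\Delta_k(\nu_u,\nu_v)\le\Delta_k(\nu_u\oplus e_u,\nu_v\oplus e_v)+2$, so whenever $\prob{MHAM}_k^{2d}(u,v,e_u,e_v)$ returns a value so does $\prob{MHAM}_k^{2d+2}(u,v,0^k,0^k)$, and then Proposition~\ref{prop:edge_test} lets the referee decide whether $\{u,v\}\in E$ by comparing the two values.

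The combinatorial heart is a bound on the number of candidate cliques per pivot. Fix a node $v$, and call a node $u\ne v$ \emph{relevant to $v$} if $\{u,v\}\in E$ and $\Delta_k(\nu_u\oplus e_u,\nu_v\oplus e_v)\le 2d$; let $R(v)$ be the set of such nodes, so $R(v)\subseteq N(v)$ and the referee can compute $R(v)$ using $O(k)$ modified Hamming distance queries. Let $S$ be any max-$d$-isolated clique and $v\in S$, with $s=|S|$. Since $S$ is a clique, $v$ is adjacent to the other $s-1$ nodes of $S$, so $\deg(v)\ge s-1$; since $v$ has at most $d$ edges leaving $S$, $\deg(v)\le s-1+d$. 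By Proposition~\ref{prop:2d}, every $u\in S\setminus\{v\}$ satisfies $\Delta_k(\nu_u\oplus e_u,\nu_v\oplus e_v)\le 2d$, and it is adjacent to $v$, so $S\setminus\{v\}\subseteq R(v)$. As $|R(v)|\le\deg(v)\le s-1+d$, the set $T:=R(v)\setminus S$ has $|T|=|R(v)|-(s-1)\le d$ and satisfies $(R(v)\cup\{v\})\setminus T=S$. Hence \emph{every} max-$d$-isolated clique containing $v$ is of the form $(R(v)\cup\{v\})\setminus T$ for some $T\subseteq R(v)$ with $|T|\le d$, and there are only $\sum_{j=0}^d\binom{|R(v)|}{j}=O(k^d)$ such sets.

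The algorithm is then: for every node $v$, compute $R(v)$; for every $T\subseteq R(v)$ with $|T|\le d$, form $C=(R(v)\cup\{v\})\setminus T$ and \emph{verify} it. To verify $C$: for every pair $u\ne w$ in $C$, query $\prob{MHAM}_k^{2d}(u,w,e_u,e_w)$; if it returns $\perp$ then by Proposition~\ref{prop:2d} $C$ cannot be a max-$d$-isolated clique, so reject $C$; otherwise query $\prob{MHAM}_k^{2d+2}(u,w,0^k,0^k)$ and use Proposition~\ref{prop:edge_test} to decide whether $\{u,w\}\in E$, rejecting $C$ if it is a non-edge. If $C$ survives, it is a clique, so each $u\in C$ has exactly $|C|-1$ neighbors inside $C$ and hence $\deg(u)-(|C|-1)$ outside; reject $C$ if any of these exceeds $d$. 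The referee outputs the set of all surviving $C$'s (with duplicates removed). Every max-$d$-isolated clique is produced as a candidate (by the previous paragraph, taking any $v$ in it) and passes verification; conversely any surviving $C$ is a clique in which every node has at most $d$ outside edges, i.e.\ a max-$d$-isolated clique; so the output is exactly the set of max-$d$-isolated cliques. On any input the algorithm makes $O(k^2)$ queries for the $R(v)$'s plus, for each of the $k\cdot O(k^d)$ candidates, at most $\binom{k}{2}$ pairwise $\prob{MHAM}$ queries, each expanding to depth $k^{O(d)}$; so the whole computation is an $\MEQ_{k,k}$ decision tree of depth $k^{O(d)}$, and Theorem~\ref{th:main} with $n=k$, $D=k^{O(d)}$ and constant $\delta$ yields the claimed $O(kd\log^2 k)$-qubit bounded-error protocol.

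The step I expect to be the main obstacle is the combinatorial claim of the second paragraph --- that a max-$d$-isolated clique containing $v$ is obtained from the relevant neighborhood $R(v)\cup\{v\}$ by deleting at most $d$ nodes --- which is an adaptation of the isolated-clique enumeration idea of~\cite{KHMN09} to the fingerprint setting, and which crucially relies on the symmetric-difference bounds of Propositions~\ref{prop:2d} and~\ref{prop:edge_test}. Once that is in place, everything else is a routine bookkeeping of the query count followed by a direct appeal to Theorem~\ref{th:main}.
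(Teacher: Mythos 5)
Your proof is correct, and its core is the same as the paper's: implement the two query types $\prob{MHAM}_k^{2d}(u,v,e_u,e_v)$ and $\prob{MHAM}_k^{2d+2}(u,v,0^k,0^k)$ as $\MEQ_{k,k}$ subtrees of depth $k^{O(d)}$, use Propositions~\ref{prop:2d} and~\ref{prop:edge_test} plus the transmitted degrees to recognize max-$d$-isolated cliques from pairwise modified Hamming distances, and conclude via Theorem~\ref{th:main} with $\log D = O(d\log k)$. Where you genuinely diverge is in how the enumeration is organized. The paper simply evaluates all $2\binom{k}{2}$ pairwise $\prob{MHAM}$ queries and labels each leaf with every subset $S\subseteq V$ satisfying three conditions (all pairwise queries non-$\bot$, the ``$+2$'' edge test holding for all pairs, and $\deg(u)\le |S|+d-1$); the candidate sets are thus enumerated implicitly over all of $2^V$ at leaf-labeling time, which costs nothing in communication but is not a polynomial-time procedure for the referee. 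You instead prove an extra combinatorial lemma --- every max-$d$-isolated clique containing a pivot $v$ equals $(R(v)\cup\{v\})\setminus T$ for some $T\subseteq R(v)$ with $|T|\le d$, where $R(v)$ is the set of neighbors $u$ of $v$ with $\Delta_k(\nu_u\oplus e_u,\nu_v\oplus e_v)\le 2d$ --- which caps the number of candidates at $O(k^{d+1})$ and yields a referee whose post-processing is polynomial for constant $d$ (this is the adaptation of the enumeration idea of~\cite{KHMN09}, and your derivation of it from $\deg(v)\le |S|+d-1$ and Proposition~\ref{prop:2d} is sound). Both routes give the same $k^{O(d)}$ tree depth and hence the same $O(kd\log^2 k)$ communication; yours buys referee efficiency at the cost of a longer argument, the paper's is shorter but leaves the leaf-labeling combinatorially brute-force.
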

 
\begin{proof}
	We compute $\prob{MHAM}_k^{2d+2}(u,v,0^k,0^k)$ and
	$\prob{MHAM}_k^{2d}(u,v,e_u,e_v)$ for all pairs $u,v\in G$,
	using an $\MEQ_k$ decision tree of depth $\binom{k}{2} \cdot O(n^d) + \binom{k}{2} \cdot O(n^{d+2}) = O(k^2 n^{d+2})$
	(this consists of $2\binom{k}{2}$ ``small'' $\MEQ$ decision trees, one for each $\prob{MHAM}$
	query, composed with one another).
	In addition, we have each node send its degree to the referee.

	We label each leaf of the decision tree based on the results of the $\prob{MHAM}$ queries
	leading to the leaf.
	Each subset $S \subseteq V$ is listed as a max-$d$-isolated clique in a given leaf if it satisfies
	the following three conditions:
\begin{enumerate}
	\item $\prob{MHAM}_k^{2d}(u, v, e_u,e_v) \neq \bot$ for all pairs $u,v\in S$.
		By Proposition~\ref{prop:edge_test},
		this also implies that $\prob{MHAM}_k^{2d+2}(u, v, 0^k, 0^k) \neq \bot$ for all $u,v\in S$.
	\item $\prob{MHAM}_k^{2d+2}(u,v, 0^k, 0^k)= \prob{MHAM}_k^{2d}(u, v, e_u,e_v)+2$ for all pairs $u,v\in S$.
		And finally,
  \item $\deg(u)\le |S| + d -1$ for all $u\in S$.
\end{enumerate}


The correctness of this enumeration algorithm follows from the two observations above,
together with the fact that if $S$ is a clique, then a node $u \in S$
has at most $d$ edges going outside $S$ if and only if $\deg(u) \leq |S| + d - 1$.
\end{proof}

\bibliographystyle{plainurl}
\bibliography{refs}

\providecommand{\noopsort}[1]{}
\begin{thebibliography}{10}

\bibitem{Aar2016}
Scott Aaronson.
\newblock Lecture notes for the 28th {McGill} invitational workshop on computational complexity.
\newblock Arxiv:1607.05256, 2016.

\bibitem{A+24}
Amirreza Akbari, Xavier Coiteux-Roy, Francesco d'Amore, Fran{\c c}ois {Le Gall}, Henrik Lievonen, Darya Melnyk, Augusto Modanese, Shreyas Pai, Marc-Olivier Renou, V{\' a}clav Rozho{\v n}, and Jukka Suomela.
\newblock Online locality meets distributed quantum computing.
\newblock ArXiv:2403.01903, 2024.

\bibitem{AVPODC22}
Joran~{\noopsort{Apeldoorn}}van Apeldoorn and Tijn de~Vos.
\newblock A framework for distributed quantum queries in the {CONGEST} model.
\newblock In {\em Proceedings of the 2022 {ACM} Symposium on Principles of Distributed Computing (PODC 2022)}, pages 109--119, 2022.
\newblock \href {https://doi.org/10.1145/3519270.3538413} {\path{doi:10.1145/3519270.3538413}}.

\bibitem{Arfaoui+14}
Heger Arfaoui and Pierre Fraigniaud.
\newblock What can be computed without communications?
\newblock {\em {SIGACT} News}, 45(3):82--104, 2014.
\newblock \href {https://doi.org/10.1145/2670418.2670440} {\path{doi:10.1145/2670418.2670440}}.

\bibitem{Babai+97}
L\'aszl\'o Babai and Peter~G. Kimmel.
\newblock Randomized simultaneous messages: Solution of a problem of {Yao} in communication complexity.
\newblock In {\em Proceedings of the 12th Annual IEEE Conference on Computational Complexity (CCC 1997)}, pages 239--246, 1997.
\newblock \href {https://doi.org/10.1109/CCC.1997.612319} {\path{doi:10.1109/CCC.1997.612319}}.

\bibitem{Bandelt1986}
Hans{-}J{\"{u}}rgen Bandelt and Henry~Martyn Mulder.
\newblock Distance-hereditary graphs.
\newblock {\em Journal of Combinatorial Theory, Series B}, 41(2):182--208, 1986.
\newblock \href {https://doi.org/10.1016/0095-8956(86)90043-2} {\path{doi:10.1016/0095-8956(86)90043-2}}.

\bibitem{BBD+97}
Adriano Barenco, Andr{\'{e}} Berthiaume, David Deutsch, Artur Ekert, Richard Jozsa, and Chiara Macchiavello.
\newblock Stabilization of quantum computations by symmetrization.
\newblock {\em {SIAM} Journal on Computing}, 26(5):1541--1557, 1997.
\newblock \href {https://doi.org/10.1137/S0097539796302452} {\path{doi:10.1137/S0097539796302452}}.

\bibitem{BMRT14}
Florent Becker, Pedro Montealegre, Ivan Rapaport, and Ioan Todinca.
\newblock The simultaneous number-in-hand communication model for networks: Private coins, public coins and determinism.
\newblock In {\em Proceedings of the 21st International Colloquium on Structural Information and Communication Complexity (SIROCCO 2024)}, volume 8576 of {\em Lecture Notes in Computer Science}, pages 83--95. Springer, 2014.
\newblock \href {https://doi.org/10.1007/978-3-319-09620-9\_8} {\path{doi:10.1007/978-3-319-09620-9\_8}}.

\bibitem{BCWW01}
Harry Buhrman, Richard Cleve, John Watrous, and Ronald {de Wolf}.
\newblock Quantum fingerprinting.
\newblock {\em Physical Review Letters}, 87:167902, 2001.
\newblock \href {https://doi.org/10.1103/PhysRevLett.87.167902} {\path{doi:10.1103/PhysRevLett.87.167902}}.

\bibitem{CR+STOC24}
Xavier Coiteux-Roy, Francesco {d'Amore}, Rishikesh Gajjala, Fabian Kuhn, Fran{\c c}ois {Le Gall}, Henrik Lievonen, Augusto Modanese, Marc-Olivier Renou, Gustav Schmid, and Jukka Suomela.
\newblock No distributed quantum advantage for approximate graph coloring.
\newblock In {\em Proceedings of the 56th ACM Symposium on Theory of Computing (STOC 2024)}, pages 1901--1910, 2024.
\newblock \href {https://doi.org/10.1145/3618260.3649679} {\path{doi:10.1145/3618260.3649679}}.

\bibitem{Denchev+08}
Vasil~S. Denchev and Gopal Pandurangan.
\newblock Distributed quantum computing: a new frontier in distributed systems or science fiction?
\newblock {\em SIGACT News}, 39(3):77--95, 2008.
\newblock \href {https://doi.org/10.1145/1412700.1412718} {\path{doi:10.1145/1412700.1412718}}.

\bibitem{ElkinKNP14}
Michael Elkin, Hartmut Klauck, Danupon Nanongkai, and Gopal Pandurangan.
\newblock Can quantum communication speed up distributed computation?
\newblock In {\em Proceedings of the 33rd {ACM} Symposium on Principles of Distributed Computing (PODC 2014)}, pages 166--175, 2014.
\newblock \href {https://doi.org/10.1145/2611462.2611488} {\path{doi:10.1145/2611462.2611488}}.

\bibitem{FOZ20}
Orr Fischer, Rotem Oshman, and Uri Zwick.
\newblock Public vs. private randomness in simultaneous multi-party communication complexity.
\newblock {\em Theoretical Computer Science}, 810:72--81, 2020.
\newblock \href {https://doi.org/10.1016/j.tcs.2018.04.032} {\path{doi:10.1016/j.tcs.2018.04.032}}.

\bibitem{FLNP21}
Pierre Fraigniaud, Fran{\c c}ois~Le Gall, Harumichi Nishimura, and Ami Paz.
\newblock Distributed quantum proofs for replicated data.
\newblock In {\em Proceedings of the 12th Innovations in Theoretical Computer Science Conference (ITCS 2021)}, pages 28:1--28:20, 2021.
\newblock \href {https://doi.org/10.4230/LIPICS.ITCS.2021.28} {\path{doi:10.4230/LIPICS.ITCS.2021.28}}.

\bibitem{Fraigniaud2024}
Pierre Fraigniaud, Mael Luce, Fr{\'{e}}d{\'{e}}ric Magniez, and Ioan Todinca.
\newblock Even-cycle detection in the randomized and quantum {CONGEST} model.
\newblock In {\em Proceedings of the 43rd {ACM} Symposium on Principles of Distributed Computing (PODC 2024)}, page 209–219, 2024.
\newblock \href {https://doi.org/10.1145/3662158.3662767} {\path{doi:10.1145/3662158.3662767}}.

\bibitem{Gao15}
Jingliang Gao.
\newblock Quantum union bounds for sequential projective measurements.
\newblock {\em Physical Review A}, 92:052331, 2015.
\newblock \href {https://doi.org/10.1103/PhysRevA.92.052331} {\path{doi:10.1103/PhysRevA.92.052331}}.

\bibitem{Gavinsky+CCC13}
Dmitry Gavinsky, Tsuyoshi Ito, and Guoming Wang.
\newblock Shared randomness and quantum communication in the multi-party model.
\newblock In {\em Proceedings of the 28th Conference on Computational Complexity ({CCC} 2013)}, pages 34--43, 2013.
\newblock \href {https://doi.org/10.1109/CCC.2013.13} {\path{doi:10.1109/CCC.2013.13}}.

\bibitem{GKW04}
Dmitry Gavinsky, Julia Kempe, and Ronald {de Wolf}.
\newblock Quantum communication cannot simulate a public coin, 2004.
\newblock ArXiv:quant-ph/0411051.

\bibitem{Gavinsky+2008}
Dmitry Gavinsky and Pavel Pudl{\'{a}}k.
\newblock Exponential separation of quantum and classical non-interactive multi-party communication complexity.
\newblock In {\em Proceedings of the 23rd Annual {IEEE} Conference on Computational Complexity ({CCC} 2008)}, pages 332--339, 2008.
\newblock \href {https://doi.org/10.1109/CCC.2008.27} {\path{doi:10.1109/CCC.2008.27}}.

\bibitem{GavoilleKM09}
Cyril Gavoille, Adrian Kosowski, and Marcin Markiewicz.
\newblock What can be observed locally?
\newblock In {\em Proceedings of the 23rd International Symposium on Distributed Computing (DISC 2009)}, volume 5805 of {\em LNCS}, pages 243--257. Springer, 2009.
\newblock \href {https://doi.org/10.1007/978-3-642-04355-0\_26} {\path{doi:10.1007/978-3-642-04355-0\_26}}.

\bibitem{GS20}
Hip\'olito G\'omez-Sousa.
\newblock Multi-party quantum fingerprinting with weak coherent pulses: circuit design and protocol analysis.
\newblock {\em New Journal of Physics}, 22:113004, 2020.
\newblock \href {https://doi.org/10.1088/1367-2630/abc2e5} {\path{doi:10.1088/1367-2630/abc2e5}}.

\bibitem{Hasegawa+PODC24}
Atsuya Hasegawa, Srijita Kundu, and Harumichi Nishimura.
\newblock On the power of quantum distributed proofs.
\newblock In {\em Proceedings of the 43rd {ACM} Symposium on Principles of Distributed Computing (PODC 2024)}, page 220–230, 2024.
\newblock \href {https://doi.org/10.1145/3662158.3662788} {\path{doi:10.1145/3662158.3662788}}.

\bibitem{HSZZ06}
Wei Huang, Yaoyun Shi, Shengyu Zhang, and Yufan Zhu.
\newblock The communication complexity of the {Hamming} distance problem.
\newblock {\em Information Processing Letters}, 99(4):149--153, 2006.
\newblock \href {https://doi.org/10.1016/j.ipl.2006.01.014} {\path{doi:10.1016/j.ipl.2006.01.014}}.

\bibitem{II09}
Hiro Ito and Kazuo Iwama.
\newblock Enumeration of isolated cliques and pseudo-cliques.
\newblock {\em {ACM} Transactions on Algorithms}, 5(4):40:1--40:21, 2009.
\newblock \href {https://doi.org/10.1145/1597036.1597044} {\path{doi:10.1145/1597036.1597044}}.

\bibitem{IIO05}
Hiro Ito, Kazuo Iwama, and Tsuyoshi Osumi.
\newblock Linear-time enumeration of isolated cliques.
\newblock In {\em Proceedings of the 13th Annual European Symposium (ESA 2005)}, volume 3669 of {\em Lecture Notes in Computer Science}, pages 119--130. Springer, 2005.
\newblock \href {https://doi.org/10.1007/11561071\_13} {\path{doi:10.1007/11561071\_13}}.

\bibitem{Izumi+PODC19}
Taisuke Izumi and Fran{\c c}ois~Le Gall.
\newblock Quantum distributed algorithm for the {All-Pairs Shortest Path} problem in the {CONGEST-CLIQUE} model.
\newblock In {\em Proceedings of the 38th {ACM} Symposium on Principles of Distributed Computing (PODC 2019)}, pages 84--93, 2019.
\newblock \href {https://doi.org/10.1145/3293611.3331628} {\path{doi:10.1145/3293611.3331628}}.

\bibitem{Izumi+STACS20}
Taisuke Izumi, Fran{\c{c}}ois {Le Gall}, and Fr{\'{e}}d{\'{e}}ric Magniez.
\newblock Quantum distributed algorithm for triangle finding in the {CONGEST} model.
\newblock In {\em Proceedings of the 37th International Symposium on Theoretical Aspects of Computer Science (STACS 2020)}, pages 23:1--23:13, 2020.
\newblock \href {https://doi.org/10.4230/LIPIcs.STACS.2020.23} {\path{doi:10.4230/LIPIcs.STACS.2020.23}}.

\bibitem{KNY08}
Masaru Kada, Harumichi Nishimura, and Tomoyuki Yamakami.
\newblock The efficiency of quantum identity testing of multiple states.
\newblock {\em Journal of Physics A: Mathematical and Theoretical}, 41:395309, 2008.
\newblock \href {https://doi.org/10.1088/1751-8113/41/39/395309} {\path{doi:10.1088/1751-8113/41/39/395309}}.

\bibitem{KMRS15}
Jarkko Kari, Mart{\'{\i}}n Matamala, Ivan Rapaport, and Ville Salo.
\newblock Solving the induced subgraph problem in the randomized multiparty simultaneous messages model.
\newblock In {\em Proceedings of the 22nd International Colloquium on Structural Information and Communication Complexity (SIROCCO 2015)}, volume 9439 of {\em Lecture Notes in Computer Science}, pages 370--384. Springer, 2015.
\newblock \href {https://doi.org/10.1007/978-3-319-25258-2\_26} {\path{doi:10.1007/978-3-319-25258-2\_26}}.

\bibitem{KHMN09}
Christian Komusiewicz, Falk H{\"{u}}ffner, Hannes Moser, and Rolf Niedermeier.
\newblock Isolation concepts for efficiently enumerating dense subgraphs.
\newblock {\em Theoretical Computer Science}, 410(38-40):3640--3654, 2009.
\newblock \href {https://doi.org/10.1016/j.tcs.2009.04.021} {\path{doi:10.1016/j.tcs.2009.04.021}}.

\bibitem{Lam12}
Michael Lampis.
\newblock Algorithmic meta-theorems for restrictions of treewidth.
\newblock {\em Algorithmica}, 64(1):19--37, 2012.
\newblock \href {https://doi.org/10.1007/s00453-011-9554-x} {\path{doi:10.1007/s00453-011-9554-x}}.

\bibitem{LeGall+2023}
Fran\c{c}ois Le~Gall, Masayuki Miyamoto, and Harumichi Nishimura.
\newblock {Distributed Merlin-Arthur Synthesis of Quantum States and Its Applications}.
\newblock In {\em Proceedings of the 48th International Symposium on Mathematical Foundations of Computer Science (MFCS 2023)}, pages 63:1--63:15, 2023.
\newblock \href {https://doi.org/10.4230/LIPIcs.MFCS.2023.63} {\path{doi:10.4230/LIPIcs.MFCS.2023.63}}.

\bibitem{LeGall+PODC18}
Fran{\c{c}}ois {Le Gall} and Fr{\'e}d{\'e}ric Magniez.
\newblock Sublinear-time quantum computation of the diameter in {CONGEST} networks.
\newblock In {\em In Proceedings of the 37th ACM Symposium on Principles of Distributed Computing (PODC 2018)}, pages 337--346, 2018.
\newblock \href {https://doi.org/10.1145/3212734.3212744} {\path{doi:10.1145/3212734.3212744}}.

\bibitem{LMN23}
Fran{\c c}ois Le~Gall, Masayuki Miyamoto, and Harumichi Nishimura.
\newblock Distributed quantum interactive proofs.
\newblock In {\em Proceedings of the 40th International Symposium on Theoretical Aspects of Computer Science (STACS 2023)}, pages 63:1--63:15, 2023.
\newblock \href {https://doi.org/10.4230/LIPICS.STACS.2023.42} {\path{doi:10.4230/LIPICS.STACS.2023.42}}.

\bibitem{LeGall+STACS19}
Fran{\c{c}}ois {Le Gall}, Harumichi Nishimura, and Ansis Rosmanis.
\newblock Quantum advantage for the {LOCAL} model in distributed computing.
\newblock In {\em Proceedings of the International Symposium on Theoretical Aspects of Computer Science (STACS)}, pages 49:1--49:14, 2019.
\newblock \href {https://doi.org/10.4230/LIPIcs.STACS.2019.49} {\path{doi:10.4230/LIPIcs.STACS.2019.49}}.

\bibitem{MPRT20}
Pedro Montealegre, Sebastian Perez{-}Salazar, Ivan Rapaport, and Ioan Todinca.
\newblock Graph reconstruction in the congested clique.
\newblock {\em Journal of Computer and System Sciences}, 113:1--17, 2020.
\newblock \href {https://doi.org/10.1016/j.jcss.2020.04.004} {\path{doi:10.1016/j.jcss.2020.04.004}}.

\bibitem{Newman91}
Ilan Newman.
\newblock Private vs. common random bits in communication complexity.
\newblock {\em Information Processing Letters}, 39(2):67--71, 1991.
\newblock \href {https://doi.org/10.1016/0020-0190(91)90157-D} {\path{doi:10.1016/0020-0190(91)90157-D}}.

\bibitem{Newman1996}
Ilan Newman and Mario Szegedy.
\newblock Public vs. private coin flips in one round communication games (extended abstract).
\newblock In {\em Proceedings of the Twenty-Eighth Annual {ACM} Symposium on the Theory of Computing (STOC 1996)}, pages 561--570, 1996.
\newblock \href {https://doi.org/10.1145/237814.238004} {\path{doi:10.1145/237814.238004}}.

\bibitem{Q+21}
Ji-Qian Qin, Jing-Tao Wang, Yun-Long Yu, and Xiang-Bin Wang.
\newblock General theory of quantum fingerprinting network.
\newblock {\em Physical Review Research}, 3:033039, 2021.
\newblock \href {https://doi.org/10.1103/PhysRevResearch.3.033039} {\path{doi:10.1103/PhysRevResearch.3.033039}}.

\bibitem{Tani+12}
Seiichiro Tani, Hirotada Kobayashi, and Keiji Matsumoto.
\newblock Exact quantum algorithms for the leader election problem.
\newblock {\em {ACM} Transactions on Computation Theory}, 4(1):1:1--1:24, 2012.
\newblock \href {https://doi.org/10.1145/2141938.2141939} {\path{doi:10.1145/2141938.2141939}}.

\bibitem{WYPODC22}
Xudong Wu and Penghui Yao.
\newblock Quantum complexity of weighted diameter and radius in {CONGEST} networks.
\newblock In {\em Proceedings of the 42nd {ACM} Symposium on Principles of Distributed Computing (PODC 2022)}, pages 120--130, 2022.
\newblock \href {https://doi.org/10.1145/3519270.3538441} {\path{doi:10.1145/3519270.3538441}}.

\bibitem{Yao03}
Andrew~Chi{-}Chih Yao.
\newblock On the power of quantum fingerprinting.
\newblock In {\em Proceedings of the 35th Annual {ACM} Symposium on Theory of Computing (STOC 2003)}, pages 77--81, 2003.
\newblock \href {https://doi.org/10.1145/780542.780554} {\path{doi:10.1145/780542.780554}}.

\end{thebibliography}

\clearpage
\appendix

\section{Illustration of a 2-Outcome Measurement}\label{Appendix:fig}

\begin{figure}[ht!]
    \centering
    \begin{tikzpicture}[scale=0.4,rectnode/.style={thick,shape=rectangle,draw=black,fill=white,minimum height=30mm, minimum width=15mm},rectnode2/.style={shape=rectangle,draw=black,minimum height=5mm, minimum width=7mm},roundnode/.style={circle, draw=black, fill=black, minimum height=1mm}]
        \newcommand\XA{9.6}
        \newcommand\YA{5}
            
        \node[] (ancilla) at (-1.5,1.7*\YA) {};
        \node[] (R1) at (-1.5,1.4*\YA) {};
        \node[] (R4) at (-1.5,0.6*\YA) {};
        \node[] (RR1) at (0.75*\XA,1.4*\YA) {};
        \node[] (RR4) at (0.75*\XA,0.6*\YA) {};
        \node[] (RR5) at (1.3*\XA,1.69*\YA) {outcome $b\in \{0,1\}$};

        \draw (R1) -- (RR1);
        \draw (R4) -- (RR4);

        \node[rectnode2] (mes) at (0.6*\XA+2,1.7*\YA) {};
        \draw (ancilla) -- (mes);
        \draw[->, thick] (0.6*\XA+1.9,1.65*\YA) -- (0.6*\XA+2.1,1.79*\YA);
        \draw[thick] (0.6*\XA+1.6,1.65*\YA) .. controls  (0.6*\XA+1.8,1.75*\YA) and  (0.6*\XA+2.2,1.75*\YA) .. (0.6*\XA+2.3,1.65*\YA);

         \node[rectnode] (U) at (0.3*\XA,1.15*\YA) {\LARGE $U$};
         
         \node[draw=none,fill=none] at (-2.5,1.7*\YA) {$\ket{0}_{\mathsf{S}}$};
         \draw[thick,decorate,decoration={brace,amplitude=1mm}] (-2,-1+0.75*\YA) -- (-2,3.5+0.75*\YA);
         \node[draw=none,fill=none] at (-3.5,1*\YA) {$\sf{R}$};
         \node[draw=none,fill=none] at (6.00,1*\YA) {$\vdots$};
         \node[draw=none,fill=none] at (-0.25,1*\YA) {$\vdots$};
    \end{tikzpicture}
    \caption{A 2-outcome measurement of a quantum register $\mathsf{R}$.}\label{fig:GM}
    \end{figure}

\section{Projective measurements}\label{Appendix:pm}
Projective measurements are a special kind of measurements allowed by the laws of quantum mechanics. We explain below this concept and how to convert a 2-outcome measurement (as introduced in Section \ref{sec:prelim}) into a 2-outcome projective measurement. 

The 2-outcome projective measurement corresponding to the 2-outcome measurement associated with the unitary $U$ is the measurement described in Figure \ref{fig:PM}. There are two key properties. First, for any $b\in\{0,1\}$, the probability of obtaining $b$ is exactly the same as the probability of obtaining $b$ in the standard measurement associated with $U$. The second, and crucial, property is that for any $b\in\{0,1\}$, if the probability of obtaining $b$ in the 2-outcome projective measurement is very close to $1$, then the postmeasurement state is very close to the initial state $\ket{\psi}$, which means that this state can be ``reused'' in later computation. In this paper, we will not need a formal statement of this second property. We will use instead (as a black box) the quantum union bound from Theorem \ref{thm:Gao15}. 



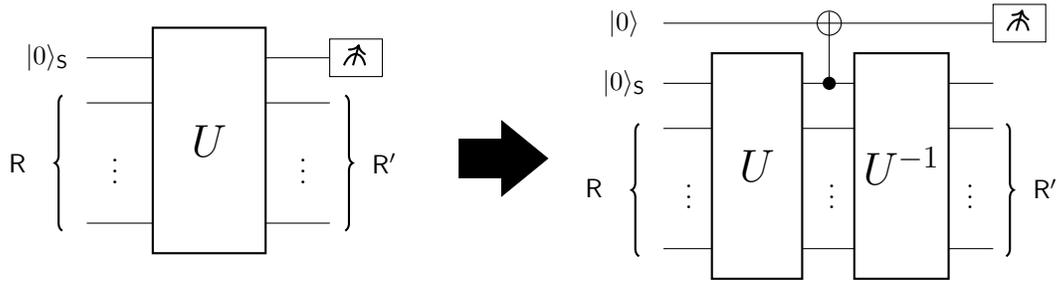
\begin{figure}[ht!]
    \begin{minipage}{.2\textwidth}
    \centering
    \begin{tikzpicture}[scale=0.4,rectnode/.style={thick,shape=rectangle,draw=black,fill=white,minimum height=30mm, minimum width=15mm},rectnode2/.style={shape=rectangle,draw=black,minimum height=5mm, minimum width=7mm},roundnode/.style={circle, draw=black, fill=black, minimum height=1mm}]
        \newcommand\XA{9.6}
        \newcommand\YA{5}
            
        \node[] (ancilla) at (-1.5,1.7*\YA) {};
        \node[] (R1) at (-1.5,1.4*\YA) {};
        \node[] (R4) at (-1.5,0.6*\YA) {};
        \node[] (RR1) at (0.75*\XA,1.4*\YA) {};
        \node[] (RR4) at (0.75*\XA,0.6*\YA) {};

        \draw (R1) -- (RR1);
        \draw (R4) -- (RR4);

        \node[rectnode2] (mes) at (0.6*\XA+2,1.7*\YA) {};
        \draw (ancilla) -- (mes);
        \draw[->, thick] (0.6*\XA+1.9,1.65*\YA) -- (0.6*\XA+2.1,1.79*\YA);
        \draw[thick] (0.6*\XA+1.6,1.65*\YA) .. controls  (0.6*\XA+1.8,1.75*\YA) and  (0.6*\XA+2.2,1.75*\YA) .. (0.6*\XA+2.3,1.65*\YA);

         \node[rectnode] (U) at (0.3*\XA,1.15*\YA) {\LARGE $U$};
         
         \node[draw=none,fill=none] at (-2.5,1.7*\YA) {$\ket{0}_{\mathsf{S}}$};
         \draw[thick,decorate,decoration={brace,amplitude=1mm}] (-2,-1+0.75*\YA) -- (-2,3.5+0.75*\YA);
         \node[draw=none,fill=none] at (-3.5,1*\YA) {$\sf{R}$};
         \node[draw=none,fill=none] at (6.00,1*\YA) {$\vdots$};
         \node[draw=none,fill=none] at (-0.25,1*\YA) {$\vdots$};
         \draw[thick,decorate,decoration={brace,amplitude=1mm}] (7.4,3.5+0.75*\YA) -- (7.4,-1+0.75*\YA);
         \node[draw=none,fill=none] at (8.7,1*\YA) {$\sf{R'}$};
    \end{tikzpicture}
    \end{minipage}
    \hspace{28mm}
    \begin{minipage}{.2\textwidth}
        \centering
        \begin{tikzpicture}[scale=0.4,rectnode/.style={thick,shape=rectangle,draw=black,fill=white,minimum height=30mm, minimum width=12mm},rectnode2/.style={shape=rectangle,draw=black,minimum height=5mm, minimum width=7mm},roundnode/.style={circle, draw=black, fill=black, minimum height=1mm}]
            \newcommand\XA{9.6}
            \newcommand\YA{5}
            \draw[black, fill=black, -{Triangle[width = 28pt, length = 18pt]}, line width = 15pt] (-8.0, 6.0) -- (-5.0, 6.0);
            \node[] (R0) at (-1.5,1.7*\YA) {};
            \node[] (RR0) at (1.05*\XA,1.7*\YA) {};
            \node[] (R1) at (-1.5,1.4*\YA) {};
            \node[] (R4) at (-1.5,0.6*\YA) {};
            \node[] (RR1) at (1.05*\XA,1.4*\YA) {};
            \node[] (RR4) at (1.05*\XA,0.6*\YA) {};
            \node[] (ancilla2) at (-1.5,2.1*\YA) {};
            
            \draw (R0) -- (RR0);
            \draw (R1) -- (RR1);
            \draw (R4) -- (RR4);
    
            \node[rectnode2] (mes) at (0.9*\XA+2,2.1*\YA) {};
            \draw (ancilla2) -- (mes);
            
            \draw[->, thick] (0.9*\XA+1.9,2.05*\YA) -- (0.9*\XA+2.1,2.19*\YA);
            \draw[thick] (0.9*\XA+1.6,2.05*\YA) .. controls  (0.9*\XA+1.8,2.15*\YA) and  (0.9*\XA+2.2,2.15*\YA) .. (0.9*\XA+2.3,2.05*\YA);
    
             \node[rectnode] (U) at (0.2*\XA,1.15*\YA) {\LARGE $U$};
             \node[rectnode] (U) at (0.7*\XA,1.15*\YA) {\LARGE $U^{-1}$};
             
             \node[draw=none,fill=none] at (-2.5,1.7*\YA) {$\ket{0}_{\mathsf{S}}$};
             \node[draw=none,fill=none] at (-2.5,2.1*\YA) {$\ket{0}$};
             \draw[thick,decorate,decoration={brace,amplitude=1mm}] (-2,-1+0.75*\YA) -- (-2,3.4+0.75*\YA);
             \draw[thick,decorate,decoration={brace,amplitude=1mm}] (10.2,3.4+0.75*\YA) -- (10.2,-1+0.75*\YA);
             \node[draw=none,fill=none] at (-3.5,1*\YA) {$\sf{R}$};
             \node[draw=none,fill=none] at (11.5,1*\YA) {$\sf{R'}$};
             \node[draw=none,fill=none] at (4.40,1*\YA) {$\vdots$};
             \node[draw=none,fill=none] at (-0.35,1*\YA) {$\vdots$};
             \node[draw=none,fill=none] at (9,1*\YA) {$\vdots$};

             \filldraw[black](0.45*\XA,1.7*\YA) circle (2mm);
             \draw(0.45*\XA,2.1*\YA) circle (4mm);
             \draw (0.45*\XA,1.7*\YA) -- (0.45*\XA,2.1*\YA+0.38);
     
        \end{tikzpicture}
        \end{minipage}
    \caption{Conversion from a quantum circuit implementing a (non-projective) 2-outcome measurement associated with the unitary $U$ (left) to a quantum circuit implementing a 2-outcome projective measurement (right). Register $\sf{R}$ stores the initial state and Register $\sf{R}'$ stores the postmeasurement state. In the right picture, the 2-qubit gate between $U$ and $U^{-1}$ represents the CNOT gate, where the $X$ gate (also called NOT gate) is applied on the $\oplus$-part conditioned on the content of the black-circle part being $1$.}\label{fig:PM}
    \end{figure}

    \clearpage
\section{Description of the SWAP test}\label{Appendix:SWAP}
In this appendix we give the technical description of the SWAP test presented in Section \ref{sec:prelim}: we describe the test in Figure~\ref{fig:SWAPprot} and give the corresponding quantum circuit in Figure \ref{fig:SWAP}.

\begin{figure}[ht!]
    \begin{center}
        \fbox{
         \begin{minipage}{13 cm} \vspace{2mm}
        
         \noindent{\bf SWAP test}\\\vspace{-3mm}
        
        \noindent\hspace{3mm} Input: two quantum states 
        in Registers ${\sf R}_1$ and ${\sf R}_2$, respectively  
    
        \vspace{4mm}
    
        \noindent\hspace{3mm}
        1. Introduce a 1-register $\mathsf{S}$ initialized to $\ket{0}_{\mathsf{S}}$.
        \vspace{2mm}
    
        \noindent\hspace{3mm}
        2. Apply the Hadamard gate $H$ on ${\sf S}$.
        \vspace{2mm}
        
        \noindent\hspace{3mm}
        3. (Controlled SWAP) If the content of ${\sf S}$ is $1$, swap ${\sf R}_1$ and ${\sf R}_2$.
        \vspace{2mm}
      
        \noindent\hspace{3mm} 
        4. Apply the Hadamard gate $H$ and then the $X$ gate on ${\sf S}$.
        \vspace{2mm}
    
        \noindent\hspace{3mm} 
        5. Measure Register ${\sf S}$ in the computational basis and output the outcome.
        \vspace{2mm}
         \end{minipage}
        }
    \end{center}\vspace{-2mm}
    \caption{Description of the SWAP test.}\label{fig:SWAPprot}
    \end{figure}

    \begin{figure}[ht!]
        \centering
        \begin{tikzpicture}[scale=0.5,rectnode/.style={thick,shape=rectangle,draw=black,fill=white,minimum height=30mm, minimum width=12mm},rectnode2/.style={shape=rectangle,draw=black,minimum height=5mm, minimum width=7mm},roundnode/.style={circle, draw=black, fill=black, minimum height=1mm},rectnode3/.style={dashed,shape=rectangle,draw=black,minimum height=45mm, minimum width=72mm}]
            \newcommand\XA{9.6}
            \newcommand\YA{6}
                
            \node[rectnode2] (H1) at (1.3,1.7*\YA) {$H$};
            \node[rectnode2] (H2) at (12.2,1.7*\YA) {$H$};
            \node[rectnode2] (X) at (14.2,1.7*\YA) {$X$};
            
            \node[] (ancilla) at (-1.5,1.7*\YA) {};
            \node[] (R1) at (-1.5,1.4*\YA) {};
            \node[] (R2) at (-1.5,1.1*\YA) {};
            \node[] (R3) at (-1.5,0.9*\YA) {};
            \node[] (R4) at (-1.5,0.6*\YA) {};
            \node[] (RR1) at (1.77*\XA,1.4*\YA) {};
            \node[] (RR2) at (1.77*\XA,1.1*\YA) {};
            \node[] (RR3) at (1.77*\XA,0.9*\YA) {};
            \node[] (RR4) at (1.77*\XA,0.6*\YA) {};
            \node[] (ancilla) at (-1.5,1.7*\YA) {};
            \node[rectnode2] (mes) at (1.6*\XA+2,1.7*\YA) {};
    
            \draw (R1) -- (RR1);
            \draw (R2) -- (RR2);
            \draw (R3) -- (RR3);
            \draw (R4) -- (RR4);
            \draw (ancilla) -- (H1) -- (H2) -- (X) -- (mes);
     
            \draw[->, thick] (1.6*\XA+1.9,1.65*\YA) -- (1.6*\XA+2.1,1.77*\YA);
            \draw[thick] (1.6*\XA+1.6,1.65*\YA) .. controls  (1.6*\XA+1.8,1.75*\YA) and  (1.6*\XA+2.2,1.75*\YA) .. (1.6*\XA+2.4,1.65*\YA);
             \node[rectnode] (SWAP) at (0.7*\XA,1*\YA) {SWAP};
             \draw (0.7*\XA,1.7*\YA) -- (SWAP);
             \filldraw[black](0.7*\XA,1.7*\YA) circle (2mm);

             \node[draw=none,fill=none] at (-2.5,1.7*\YA) {$\ket{0}_{\sf{S}}$};
             \draw[thick,decorate,decoration={brace,amplitude=1mm}] (-2,-1.2+0.75*\YA) -- (-2,1.2+0.75*\YA);
             \node[draw=none,fill=none] at (-3.5,1.25*\YA) {$\sf{R}_1$};
             \node[draw=none,fill=none] at (12.25,1.27*\YA) {$\vdots$};
             \node[draw=none,fill=none] at (1.25,1.27*\YA) {$\vdots$};
             
             \draw[thick,decorate,decoration={brace,amplitude=1mm}] (-2,-1.2+1.25*\YA) -- (-2,1.2+1.25*\YA);
             \node[draw=none,fill=none] at (-3.5,0.75*\YA) {$\sf{R}_2$}; 
             \node[draw=none,fill=none] at (1.25,0.77*\YA) {$\vdots$};
             \node[draw=none,fill=none] at (12.25,0.77*\YA) {$\vdots$};
     
        \end{tikzpicture}
        \caption{Quantum circuit for the SWAP test.}\label{fig:SWAP}
        \end{figure}
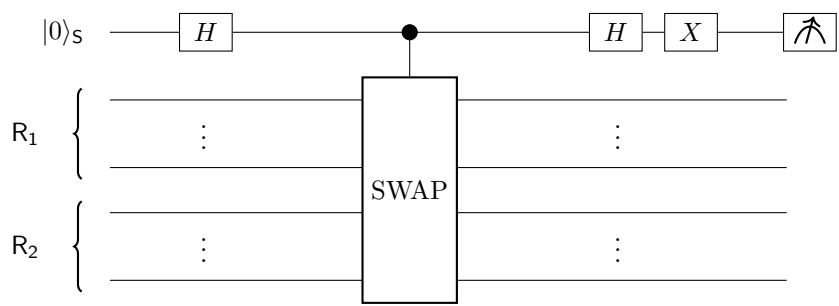

\end{document}